\documentclass[review]{elsarticle}

\usepackage{longtable}
\usepackage{rotating}

\usepackage{epsfig, amsmath,amssymb,epsf,scalefnt,array,setspace}

\usepackage{times}
\usepackage{epstopdf}

\usepackage{amsfonts}
\usepackage{booktabs}
\usepackage{multirow}
\usepackage{multicol}
\usepackage{algorithm}
\usepackage{algorithmic}
\usepackage{stfloats}
\usepackage{bm}
\usepackage{graphicx}
\usepackage{color}
\usepackage{enumerate}

\usepackage[final]{pdfpages}
\usepackage{fancyhdr}

\usepackage{lineno,hyperref}

\modulolinenumbers[5]

\journal{Journal of \LaTeX\ Templates}









\bibliographystyle{elsarticle-num}
\def\cA{{\mathcal{A}}}

 \def\cN{{\mathcal{N}}}

\def\ba{{\mathbf{a}}}    
    
   \def\bn{{\mathbf{n}}} 
    
\def\bu{{\mathbf{u}}} \def\bv{{\mathbf{v}}}   \def\by{{\mathbf{y}}}

\def\bA{{\mathbf{A}}} \def\bB{{\mathbf{B}}} \def\bC{{\mathbf{C}}} \def\bD{{\mathbf{D}}} 
\def\bF{{\mathbf{F}}} \def\bG{{\mathbf{G}}}  \def\bI{{\mathbf{I}}} 
 \def\bL{{\mathbf{L}}}   
\def\bP{{\mathbf{P}}} \def\bQ{{\mathbf{Q}}} \def\bR{{\mathbf{R}}} \def\bS{{\mathbf{S}}} 
\def\bU{{\mathbf{U}}} \def\bV{{\mathbf{V}}} \def\bW{{\mathbf{W}}} \def\bX{{\mathbf{X}}} \def\bY{{\mathbf{Y}}}
\def\bZ{{\mathbf{Z}}}



\def\argmin{\mathop{\mathrm{argmin}}}

\def\tr{\mathop{\mathrm{tr}}}

\def\rank{\mathop{\mathrm{rank}}}
\def\span{\mathop{\mathrm{span}}}
\def\vec{\mathop{\mathrm{vec}}}

\def\diag{\mathop{\mathrm{diag}}}

     \def\d4{\!\!\!\!}

\def\bSig{\mathbf{\Sigma}}   \def\bLam{\mathbf{\Lambda}}
    \def\lam{\lambda}

  \def\R{{\mathbb{R}}}    \def\E{{\mathbb{E}}}

                 
  \def\la{\left|}     \def\ra{\right|}    \def\lA{\left\|}     \def\rA{\right\|}



  \def\sig{\sigma}


  \def\-{\! - \!}  \def\+{\! + \!}  \def\={\! = \!}  \def\>{\! > \!}

\def \log{\mathrm{log}}

\newtheorem{lemma}{Lemma}

\newtheorem{remark}{Remark}

\newcommand{\bef}{\begin{figure}}
\newcommand{\eef}{\end{figure}}
\newcommand{\beq}{\begin{eqnarray}}
\newcommand{\eeq}{\end{eqnarray}}

\newenvironment{proof}[1][Proof]{\begin{trivlist}
\item[\hskip \labelsep {\bfseries #1}]}{\end{trivlist}}

\renewcommand{\qed}{\nobreak \ifvmode \relax \else
\ifdim\lastskip<1.5em \hskip-\lastskip \hskip1.5em plus0em
minus0.5em \fi \nobreak \vrule height0.5em width0.5em
depth0.25em\fi}

\newcommand{\redd}[1]{\textcolor{black}{#1}}
\begin{document}

\begin{frontmatter}

\title{Leveraging Subspace Information for Low-Rank Matrix Reconstruction}


\author[HK]{Wei Zhang \corref{mycorrespondingauthor}}
\cortext[mycorrespondingauthor]{Corresponding author}
\ead{wzhang237-c@my.cityu.edu.hk}
\author[KU]{Taejoon Kim}
\ead{taejoonkim@ku.edu}
\author[KU]{Guojun Xiong}
\ead{gjxiong@ku.edu}
\author[HK,SC]{Shu-Hung Leung}
\ead{eeeugshl@cityu.edu.hk}


\address[HK]{Department of Electronic Engineering, City University of Hong Kong, Kowloon, Hong Kong, China}
\address[KU]{Department of Electrical Engineering and Computer Science, The University of Kansas, Lawrence, KS 66045, USA}
\address[SC]{State Key Laboratory of Terahertz and Millimeter Waves, City University of Hong Kong, Hong Kong, China}

\begin{abstract}
The problem of low-rank matrix reconstruction arises in various applications in communications and signal processing. The state of the art research largely focuses on the recovery techniques that utilize affine maps satisfying the restricted isometry property (RIP). However, the affine map design and reconstruction under a priori information, i.e., column or row subspace information, has not been thoroughly investigated. To this end, we present designs of affine maps and reconstruction algorithms that fully exploit the low-rank matrix subspace information. Compared to the randomly generated affine map, the proposed affine map design permits an enhanced reconstruction. In addition, we derive an optimal representation of low-rank matrices, which is exploited to optimize the rank and subspace of the estimate by adapting them to the noise level in order to achieve the minimum mean square error (MSE). Moreover, in the case when the subspace information is not a priori available, we propose a two-step algorithm, where, in the first step, it estimates the column subspace of a low-rank matrix, and in the second step, it exploits the estimated information to complete the reconstruction. The simulation results show that the proposed algorithm achieves robust performance with much lower complexity than existing reconstruction algorithms.
\end{abstract}

\begin{keyword}
low-rank reconstruction; subspace; compressed sensing; two-step
\end{keyword}

\end{frontmatter}


\section{Introduction}
The size of matrices that appear in modern science and engineering is steadily getting large, which has required heavy computing power to timely process them.
Fortunately, in many cases, data matrices have the low-rank property.
Since the low-rank matrices have much lower degrees of freedom than the full-rank matrices, it is possible to recover them by taking limited observations  \cite{SpareRecovery} \cite{liu2009interior}.
The limited observation refers to the situation when the number of observations involved is much smaller than the dimension of the matrix.

The low-rank matrix reconstruction is to recover a large-dimensional low-rank matrix $\bL\in\R^{M\times N}$ from an affine observation $\cA(\bL)$, where $\cA(\cdot): \R^{M\times N} \mapsto \R^{p}$  \cite{recht}, where $p$ is the number of observations.
Early work in low-rank matrix reconstruction focuses on non-adaptive approaches \cite{recht,chen2015exact, cai2015rop,AlternatingMin,candes2011tight,database,ZhangSD}, where the affine map $\cA(\cdot)$ is fixed during the reconstruction. 
To provide a reliability guarantee, the affine map $\cA(\cdot)$ is designed to satisfy the restricted isometry property (RIP) \cite{recht,chen2015exact, cai2015rop,AlternatingMin,candes2011tight}.
Popularly studied non-adaptive methods are nuclear norm minimization (NNM) \cite{recht,chen2015exact,cai2015rop}, matrix factorization (MF) \cite{AlternatingMin}, and iterative thresholding \cite{SVP, IHT_matrix}.
With convex relaxation, NNM will achieve a reliable reconstruction when the RIP holds \cite{recht,chen2015exact,cai2015rop}. However, the main issue is that NNM does not scale with the problem dimension.
Alternatively, other non-convex techniques \cite{AlternatingMin, SVP, IHT_matrix} were investigated to address the complexity issue introduced by NNM. In particular, the MF in \cite{AlternatingMin} is a block-coordinate descent method, where the low-rank matrix is expressed as the product of  two low-dimensional matrices.
In \cite{SVP, IHT_matrix}, the low-rank solution is guaranteed by taking singular value projection of each iteration result.
Though these non-convex techniques have low complexity, they are still limited by an increased number of observations and many iterations to  converge.

Recent work has studied adaptive low-rank matrix reconstruction techniques \cite{subspaceAware,HighDimension} that significantly reduce the computational complexity as well as enhance the reconstruction accuracy.
Unlike the non-adaptive approaches \cite{recht,chen2015exact, cai2015rop,AlternatingMin,candes2011tight,database,ZhangSD} where they exploit the entire observations $\cA(\bL)$ for reconstruction,  the methods proposed in \cite{subspaceAware,HighDimension} divide the reconstruction task into two steps, where each step only needs to manipulate partial observations of $\cA(\bL)$ to obtain partial information of $\bL$.
Considering that the dimension of partial observations is much smaller than the entire observations, the computational complexity can be reduced substantially.
An interesting observation is that any partial information of matrix subspace provides a significant clue when adapting the affine map to enhance the reconstruction accuracy.
However, how to adapt the affine map to the available subspace information has not been thoroughly studied in the existing literature (e.g., \cite{subspaceAware,HighDimension}).

In this paper, we analyze and design low-rank matrix reconstruction algorithms that best exploit available subspace information of a low-rank matrix to enhance the reconstruction accuracy in two aspects:
(1) we find the optimal representation of the low-rank matrix estimate and (2) we provide the optimal design of the affine map $\cA(\cdot)$.
By the optimal representation, we mean that the degrees of freedom of the subspace estimate, i.e., the rank of the estimate, is adapted to obtain the minimum mean squared error (MSE).
By the optimal design of the affine map $\cA(\cdot)$, we mean that we provide an optimality condition of $\cA(\cdot)$ so the estimate achieves the minimum MSE.
Compared with randomly generated, conventional affine map, the optimized $\cA(\cdot)$ is shown to capture specific information of low-rank matrices.

In practice, the subspace information of a low-rank matrix must be estimated.
Therefore, we propose a two-step method, where the affine map $\cA(\cdot)$ is decomposed into two parts, $\cA_1(\cdot)$ and $\cA_2(\cdot)$.
In the first step, the affine map $\cA_1(\cdot)$ is designed to capture the sparse column subspace information of  $\bL$, and the subspace estimation accuracy is analyzed.
For the second step, different from the existing works in \cite{subspaceAware,HighDimension} where the affine map $\cA_2(\cdot)$  is fixed, the affine map $\cA_2(\cdot)$ in the proposed  algorithm is constructed based on the estimated column subspace information from the first step.
By doing so, the observations $\cA_2(\bL)$ collect the coefficient matrix that represents the optimal combining weights of the estimated column subspace.
An important implication is that the proposed two-step method will not waste the number of observations by not sampling the orthogonal complement of the estimated column subspace.
As demonstrated in our simulation studies, this has its own merit when estimating a large-scale, but substantially low-rank, matrix subject to a limited number of observations.

The paper is organized as follows. Section \ref{background} presents the traditional techniques for low-rank matrix reconstruction and the motivation of our work. In Section \ref{algorithm}, the low-rank matrix reconstruction under the column subspace information is investigated, where the optimization for the affine map and the minimal matrix representation is discussed.
In Section \ref{two step}, we introduce the proposed two-step low-rank matrix reconstruction technique, and analyze its error bound.
The simulation results are illustrated and analyzed in Section \ref{simulation}. Finally, conclusions are drawn in Section \ref{conclusion}.
\subsubsection*{Notations}
A bold lower case letter $\ba$ is a vector and a bold capital letter $\bA$ is a matrix.
$\bA^T,\bA^{\!-1}$, tr($\bA$), $| \bA  |$, $\| \bA \|_F$, $\| \bA  \|_*$, and $\|\ba\|_2$ are, respectively,    the transpose, inverse, trace, determinant, Frobenius norm, nuclear norm (i.e., the sum of the singular values of $\bA$) of $\bA$, and $l_2$-norm of $\ba$.
$[\bA]_{:,i}$, $[\bA]_{i,:}$, $[\bA]_{i,j}$, $[\ba]_i$ are, respectively, the $i$th column, $i$th row, $i$th row and $j$th column entry of $\bA$, and $i$th entry of vector $\ba$.
$\vec(\bA)$ stacks the columns of $\bA$ and form a long column vector.
$\diag(\bA)$ extracts the diagonal entries of $\bA$ to form a column vector.
$\bI_M \! \in \! \R^{M\times M}$ is the identity matrix.
$\text{col}(\bA)$ denotes the subspace spanned by the columns of matrix $\bA$.

\section{Low-rank matrix reconstruction} \label{background}
In this section, we define and review the low-rank matrix reconstruction problem, and provide a general motivation of our work.
\subsection{Classical techniques: non-adaptive}
Suppose a matrix observation model where the observations of $\bL\in\R^{M\times N}$ are made through the affine map $\cA(\cdot):\R^{M\times N}  \rightarrow \R^{p}$, where $p$ is the number of observations and $\rank(\bL)=r \ll \min(M,N)$.
Here, the $i$th element of the affine map $\cA(\bL)$ is given by
\beq
[\cA(\bL)]_i = \tr(\bX^T_i \bL), ~i=1,2\ldots p, \label{affine map form}
\eeq
where $\bX_i \in \R^{M \times N }$ is a measurement matrix associated with the $i$th entry of $\cA(\bL)$. In practice,  the observations are corrupted by noise
\beq
\by = \cA(\bL) + \bn \label{observations},
\eeq
where $\bn \in \R^{p }$ is the additive white Gaussian noise vector with zero mean and $\bC \in \R^{p \times p}$ covariance, i.e., $\bn \sim \cN(\mathbf{0}, \bC)$.

When the number of observations $p$ is not sufficiently large, recovering the matrix $\bL$ from $\by$ is not always possible.
Nonetheless, when the $\bL$ is low-rank, it is possible to reconstruct $\bL$ within a certain accuracy  \cite{recht}.

The NNM  formulation of the low-rank matrix reconstruction problem \cite{recht,overview,chen2015exact,cai2015rop} can be given by,
\beq
\min \limits _{\bL}\lA \bL \rA_* +\tau \lA \by - \cA(\bL)\rA_F^2 , \label{proximal gradient}
\eeq
where $\tau$ is a parameter which balances the values of two parts in the objective function.
This problem is an unconstrained convex optimization program, and can be solved  by using a proximal gradient method \cite{proximal,combettes2011proximal}.
The main drawback is that it does not scale to large-dimensional matrices due to the high complexity of singular value decomposition (SVD) in each iteration  \cite{overview, cai2010}.
\redd{Though the computational complexity can be resolved by using accelerated proximal gradient method and in-exact SVD \cite{accYao,YaoNonCVX}, the number of required observations for NNM to recover a rank-$r$ matrix $\bL \in \R^{M \times N} (M \le N)$ is still high, i.e., $\mathcal{O}(rN \log(N))$ \cite{cai2010, accYao}. When the number of available observations is less than this requirement, the reconstruction accuracy will deteriorate \cite{ZhangSD}.}

The development of MF \cite{koren2009matrix,Haldar_factor,AlternatingMin} has provided an efficient technique to handle large-scale low-rank matrices.
The MF problem for low-rank matrix reconstruction can be  formulated as
\beq
\begin{aligned}
\big( \bB^\star, \bR^\star \big) = \argmin \limits_ {\bB\in \R^{M \times r},\bR\in \R^{N \times r}} \lA \by - \cA( \bB \bR^T) \rA_2^2.  \label{matrix factor}
\end{aligned}
\eeq
Solving \eqref{matrix factor} results in $\widehat{\bL} = \bB^\star(\bR^\star)^T$.
The problem in \eqref{matrix factor} is non-convex, however, the alternating minimization with power factorization \cite{Haldar_factor} can be used to obtain an effective sub-optimal solution.
The complexity of the MF is much lower than that of NNM, and MF outperforms NNM when the matrix is substantially low-rank \cite{Haldar_factor, AlternatingMin,ZhangSparse}.
A critical drawback of MF is that it can not guarantee the convergence to the true low-rank matrix, which is largely dependent on the initializations of $\bB$ and $\bR$ \cite{AlternatingMin}. Moreover, since each iteration of solving MF is a least square, it still requires much computational time when the dimension of $\bL$ is very large.

\subsection{Previous work: adaptive techniques}
In order to reduce the computational complexity of low-rank matrix reconstruction, the works in  \cite{subspaceAware,HighDimension} have proposed adaptive matrix sensing techniques.
Specifically, in stead of manipulating the whole observations $\by$ in \eqref{observations}, they process the partial observations in two stages,
\beq
\begin{bmatrix}
  \by_1 \\
 \by_2
\end{bmatrix}
 =
\begin{bmatrix}
  \cA_1(\bL) \\
 \cA_2(\bL)
\end{bmatrix}
 + \begin{bmatrix}
  \bn_1 \\
 \bn_2
\end{bmatrix}, \label{part observation}
\eeq
where $\by_1 \in \R^{p_1 }$, $\by_2 \in \R^{p_2 }$, and $p = p_1 + p_2$.
In the first stage, $\by_1 = \cA_1(\bL) + \bn_1$ is taken to estimate the column subspace of the $\bL$, which we denote $\widehat{\bF} \in \R^{M \times r}$.
The second part $\by_2 = \cA_2(\bL) + \bn_2$ is to estimate the coefficient matrix with respect to the estimated $\widehat{\bF}$ in order to reconstruct $\bL$.
Each technique in \cite{subspaceAware, HighDimension} proposes a method of generating $\cA_2(\cdot)$ in \eqref{part observation}.
In particular, in \cite{subspaceAware}, the elements in $\cA_2(\cdot)$ are drawn from i.i.d.  Gaussian distributions.
While in \cite{HighDimension}, $\cA_2(\cdot)$ consists of elements vectors to select $m_2~(>r)$ rows of $\bL$.

Since $\by_1$ and $\by_2$ are of much smaller number than $p$,  \cite{subspaceAware, HighDimension} provide the ways to utilize the subspace information to assist the reconstruction of $\bL$, and reduce the computational complexity as well.
However, these techniques did not fully consider the design of affine maps $(\cA_1(\cdot), \cA_2(\cdot)) $ and the optimal representation of the estimated low-rank matrix, which can potentially improve the reconstruction accuracy.
Overall, how to adapt the affine maps to the available subspace information has not been thoroughly studied in the existing literature.

\subsection{Motivation}

The RIP is an important characterization that provides the performance guarantees for reliable low-rank matrix reconstruction. Because of this, it has been widely employed \cite{recht, AlternatingMin,overview,chen2015exact,cai2015rop}.
Under the RIP, many of the state of the art reconstruction methods
\cite{recht,chen2015exact,cai2015rop,AlternatingMin, SVP, IHT_matrix,subspaceAware,HighDimension} employ randomly generated affine maps and focus on the reconstruction algorithms.
On the other hand, if one can optimize the affine map to lower reconstruction error of a specific algorithm, it is expected that this will reduce the observation overhead.
When the subspace information of a low-rank matrix is partially known, it would be possible to adapt $\cA(\cdot)$ to the known subspace by minimizing MSE.

\section{Affine map design and minimal representation with known column subspace} \label{algorithm}
In this section, we analyze the optimal design of the affine map and provide the optimal representation of the estimate given the column subspace information.

Denote the singular value decomposition (SVD) of $\bL$ as
\beq
\bL = \bU \bLam \bV^T=\sum_{k=1}^r \lambda_k \bu_k \bv_k^T \label{svd c}
\eeq
where $\bU =[\bu_1,\bu_2,\cdots,\bu_r]\in \R^{M\times r}$ and $\bV =[\bv_1,\bv_2,\cdots,\bv_r]\in \R^{N\times r}$ are the left and right singular matrix, respectively. The $\bLam \in \diag([\lam_1,\lam_2,\cdots,\lam_r])$ is a diagonal matrix where the singular values $\lam_k$ are arranged in descending order.

\subsection{Design of affine map }
Denote $\bF \in \R^{M \times r}$ as a semi-unitary matrix, such that $\text{col}(\bF)$ is the column  subspaces of $\bL$.
It is worth noting that $\bF$ is not necessarily same as $\bU$ in \eqref{svd c}.
Given the $\bF$, the low-rank matrix $\bL$ can be expressed as
\beq
\bL = \bF \bQ, \label{form1 c}
\eeq
where $\bQ \in \R^{r \times N}$ is a combining matrix.
Given the representation in \eqref{form1 c}, we find the estimate of $\bL$ in the form,
$
\widehat{\bL} = \bF \widehat{\bQ} $.
Thus, the problem of estimating $\bL$ is equivalent to reconstruct $\widehat{\bQ}$.

To begin with, we reshape the affine map $\cA$ in \eqref{affine map form} as $\bS \in \R^{p \times MN}$,
\beq
\bS =
\left[
\begin{matrix}
  \vec(\bX_1), \cdots,
 \vec(\bX_p)
\end{matrix}
\right]^T. \label{measurement c}
\eeq
Then the observation $\by$ in (2) can be rewritten as
\beq
\by  &= &\bS \vec(\bL) + \bn \nonumber \\
&= &\bS \vec(\bF \bQ) + \bn \nonumber \\
&= & \bS (\bI_{N} \otimes \bF) \vec(\bQ)+\bn. \label{observation vec c}
\eeq
To formulate a design criterion to optimize the affine map $\bS$, we impose a power constraint $\| \bS \|_F^2 \le P$ to $\bS$.
In \eqref{observation vec c}, we denote
\beq
\bA = \bS ( \bI_N \otimes \bF), \label{reshape1 c}
\eeq
where $\bA \in \R^{p \times Nr}$.
Given the observation model in \eqref{observation vec c}, the efficient estimator\footnote{We call an estimator is efficient, if it is unbiased and attains the Cramer-Rao lower bound \cite{estimationBook}.} of $\bQ$ yields \cite{estimationBook}
\beq
\vec(\widehat{\bQ}) = (\bA^T \bC^{-1} \bA)^{-1} \bA^T \bC^{-1}\by. \nonumber
\eeq
The MSE of $\widehat{\bL}$ is then written as
\beq
\E\left(\lA  \widehat{\bL} - \bL\rA_F^2\right) \!\!\!\!&=&\!\!\!\!\E\left( \lA  \widehat{\bQ} - \bQ\rA _F^2\right) \nonumber\\
&=& \!\!\!\!\E\left( \lA(\bA^T \bC^{-1} \bA)^{-1} \bA^T  \bC^{-1} \bn \rA _F^2\right) \nonumber\\
& = & \!\!\!\! \tr(\bA^T\bC^{-1}\bA)^{-1}. \label{MSE form c}
\eeq

Now the problem is optimizing the affine map $\bS$ by minimizing the MSE in \eqref{MSE form c} subject to \eqref{reshape1 c} and the power constraint and it can formally be written as
\beq
&\min\limits_\bS \tr(\bA^T\bC^{-1}\bA)^{-1} \nonumber \\
&\text{subject to~~}  \bA = \bS(\bI_N \otimes \bF),~\lA \bS \rA_F^2 \le P.
 \label{optimal training1 c}
\eeq
The following lemma shows that the problem in \eqref{optimal training1 c} is equivalent to a more simple problem.
\begin{lemma} \label{lemma1}
Denote the optimal solution to \eqref{optimal training1 c} as $\widehat{\bS}$. Suppose the following problem
\beq
\min\limits_\bA \tr(\bA^T\bC^{-1}\bA)^{-1} ~ \text{subject to~} {\tr(\bA^T \bA)} \le P \label{general ray1t c}
\eeq
has the minimal solution as $\widehat{\bA}$. Then, the equality holds,
\beq
\widehat{\bS} = \widehat{\bA}(\bI_N \otimes \bF)^T. \label{eq_con11 c}
\eeq
\end{lemma}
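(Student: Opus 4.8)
The plan is to exhibit a value-preserving correspondence between the feasible sets of \eqref{optimal training1 c} and \eqref{general ray1t c} induced by the linear map $\bS \mapsto \bA = \bS(\bI_N \otimes \bF)$, and then simply read off the claim. The only structural fact I need is that $\bF$ is semi-unitary, $\bF^T \bF = \bI_r$, so that $(\bI_N \otimes \bF)^T(\bI_N \otimes \bF) = \bI_N \otimes (\bF^T \bF) = \bI_{Nr}$; consequently $\bPi := (\bI_N \otimes \bF)(\bI_N \otimes \bF)^T$ is an orthogonal projection on $\R^{MN}$, and $\bPi(\bI_N \otimes \bF) = \bI_N \otimes \bF$.

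For the first direction, I would take any $\bS$ feasible for \eqref{optimal training1 c} and set $\bA = \bS(\bI_N \otimes \bF)$. Then $\tr(\bA^T \bA) = \tr\!\big(\bS \bPi \bS^T\big) \le \tr(\bS \bS^T) = \lA \bS \rA_F^2 \le P$, so $\bA$ is feasible for \eqref{general ray1t c} and attains the same objective $\tr(\bA^T \bC^{-1} \bA)^{-1}$ (a finite objective forces $\bA$ to have full column rank, and this rank is inherited from $\bS$). Hence the optimal value of \eqref{general ray1t c} is no larger than that of \eqref{optimal training1 c}. For the reverse direction, let $\widehat{\bA}$ solve \eqref{general ray1t c} and put $\widehat{\bS} = \widehat{\bA}(\bI_N \otimes \bF)^T$. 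Using $(\bI_N \otimes \bF)^T(\bI_N \otimes \bF) = \bI_{Nr}$, one gets $\widehat{\bS}(\bI_N \otimes \bF) = \widehat{\bA}$, so the affine-map constraint in \eqref{optimal training1 c} holds, while $\lA \widehat{\bS} \rA_F^2 = \tr\!\big(\widehat{\bA}^T \widehat{\bA}\,(\bI_N \otimes \bF)^T(\bI_N \otimes \bF)\big) = \tr(\widehat{\bA}^T \widehat{\bA}) \le P$, so $\widehat{\bS}$ is feasible for \eqref{optimal training1 c} with objective value equal to the optimum of \eqref{general ray1t c}. Combining the two directions, the two problems share the same optimal value and $\widehat{\bS} = \widehat{\bA}(\bI_N \otimes \bF)^T$ is a minimizer of \eqref{optimal training1 c}, which is exactly \eqref{eq_con11 c}.

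I expect the only delicate point to be uniqueness: since $\bI_N \otimes \bF$ is tall (assuming $M > r$), the map $\bS \mapsto \bA$ is not injective, and any part of $\bS$ living in the orthogonal complement of $\text{col}(\bI_N \otimes \bF)$, i.e. $\bS(\bI_{MN} - \bPi)$, leaves both $\bA$ and hence the objective untouched while only consuming power. I would therefore note that an optimal $\widehat{\bS}$ may always be chosen with no such component — equivalently, with $\widehat{\bS} = \widehat{\bS}\bPi$ — and that this canonical choice coincides with $\widehat{\bA}(\bI_N \otimes \bF)^T$ since $\widehat{\bS}\bPi = \widehat{\bS}(\bI_N \otimes \bF)(\bI_N \otimes \bF)^T = \widehat{\bA}(\bI_N \otimes \bF)^T$. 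Everything else is a routine computation resting on the single Kronecker identity $(\bI_N \otimes \bF)^T(\bI_N \otimes \bF) = \bI_{Nr}$ and the fact that $\bPi$ is a projection, so no real obstacle remains.
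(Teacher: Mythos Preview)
Your proof is correct and rests on the same single identity the paper invokes, namely $(\bI_N \otimes \bF)^T(\bI_N \otimes \bF) = \bI_{Nr}$; the paper's entire proof is literally the one-line remark that the lemma ``is a direct consequence'' of this equality. Your write-up is considerably more explicit than the paper's, and in particular your discussion of the non-uniqueness of $\widehat{\bS}$ (any component in $\ker(\bI_N\otimes\bF)^T$ is invisible to the objective) is a point the paper glosses over entirely, so your treatment is in fact more careful than the original.
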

\begin{proof}
This is a direct consequence of the equality $(\bI_N \otimes \bF)^T (\bI_N \otimes \bF)=\bI_{Nr}$. \hfill \qed
\end{proof}

Lemma \ref{lemma1} specifies that in order to obtain the optimal $\bS$, we can turn to solving the problem \eqref{general ray1t c}, and the optimal $\widehat{\bS}$ can be calculated according to \eqref{eq_con11 c}.
The solution of \eqref{general ray1t c} is summarized in the following lemma.

\begin{lemma} \label{augL}
Suppose the power-constrained MSE minimization problem in \eqref{general ray1t c}, i.e.,
\beq
\min\limits_\bA \tr(\bA^T\bC^{-1}\bA)^{-1} ~ \text{subject to~} {\tr(\bA^T \bA)}\le P. \label{general ray1 c}
\eeq
Denote the SVD of $\bA$ and $\bC$ as $\bA = \bU_A \bSig_A \bV_A^T$ with $\bU_A \in \R^{p \times Nr }$,
$\bSig_A \in \R^{Nr \times Nr }$, $\bV_A \in \R^{Nr \times Nr }$
and $\bC = \bU_C \bSig_C \bU_C^T$ with $\bU_C \in \R^{p \times p }$,
$\bSig_C \in \R^{p \times p }$, respectively.

Then, the optimal solution to \eqref{general ray1 c} has the following form,
\beq
\begin{split}
&\bU_A = [\bU_C]_{p-Nr+1:p},\\
&\bD  =[\bSig_C]_{p-Nr+1:p,p-Nr+1:p},\\
&\mu = \tr(\bD^{    \frac{1}{2}})^2/P^2,\\
&\bSig_A^2 = \mu^{-\frac{1}{2}} \bD^{\frac{1}{2}},~
\bV_A^T \bV_A = \bI_{Nr}. \label{general optimal condition1 c}
\end{split}
\eeq
In addition, the minimal MSE is given by
\beq
\frac{1}{P}\left(\sum_{k=p-Nr+1}^{p} \lambda_{c,k}^{\frac{1}{2}}\right)^2,
\eeq
where $\lambda_{c,k}$ is the $k$th largest eigenvalue of $\bC$.
After $\widehat{\bA}$ is designed based on \eqref{general optimal condition1 c}, the $\widehat{\bS}$ is obtained by \eqref{eq_con11 c}.
\end{lemma}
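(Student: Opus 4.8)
The plan is to reduce the power‑constrained trace‑inverse minimization \eqref{general ray1 c} first to a scalar problem over the singular values of $\bA$, then to a problem over a semi‑unitary matrix, and to resolve both using Cauchy--Schwarz together with two classical eigenvalue inequalities.

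First I would dispose of the degenerate case: if $\rank(\bA)<Nr$ then $\bA^T\bC^{-1}\bA$ is singular and the objective is $+\infty$, so one may restrict to $\bA$ of full column rank $Nr$ (in particular $p\ge Nr$) and write its reduced SVD $\bA=\bU_A\bSig_A\bV_A^T$ with $\bU_A^T\bU_A=\bI_{Nr}$, $\bSig_A$ a positive diagonal matrix, and $\bV_A\in\R^{Nr\times Nr}$ orthogonal. Putting $\bR:=\bU_A^T\bC^{-1}\bU_A$, a short computation using $\bV_A^T\bV_A=\bI_{Nr}$ and the diagonality of $\bSig_A$ gives $\tr(\bA^T\bA)=\tr(\bSig_A^2)$ and $\tr\big((\bA^T\bC^{-1}\bA)^{-1}\big)=\tr(\bSig_A^{-2}\bR^{-1})=\sum_{i=1}^{Nr}[\bR^{-1}]_{ii}/[\bSig_A^2]_{ii}$. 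Thus neither the objective nor the constraint depends on $\bV_A$, which is why \eqref{general optimal condition1 c} only requires $\bV_A^T\bV_A=\bI_{Nr}$.

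Next, with $\bU_A$ (hence $\bR$) held fixed, I would optimize the positive scalars $\sigma_i^2:=[\bSig_A^2]_{ii}$ under $\sum_i\sigma_i^2\le P$. Cauchy--Schwarz gives $\big(\sum_i\sqrt{[\bR^{-1}]_{ii}}\big)^2\le\big(\sum_i[\bR^{-1}]_{ii}/\sigma_i^2\big)\big(\sum_i\sigma_i^2\big)\le P\sum_i[\bR^{-1}]_{ii}/\sigma_i^2$, so the partial minimum over $\{\sigma_i^2\}$ equals $\tfrac1P\big(\sum_i\sqrt{[\bR^{-1}]_{ii}}\big)^2$, attained exactly when the power constraint is tight and $\sigma_i^2\propto\sqrt{[\bR^{-1}]_{ii}}$. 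It then remains to minimize $\sum_{i=1}^{Nr}\sqrt{[\bR^{-1}]_{ii}}$ over semi‑unitary $\bU_A$; this is the step I expect to be the main obstacle.

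To handle it I would combine two facts. By Schur's theorem the diagonal of the symmetric matrix $\bR^{-1}$ is majorized by its ordered spectrum $\gamma_1\ge\cdots\ge\gamma_{Nr}$, so concavity of $t\mapsto\sqrt{t}$ gives $\sum_i\sqrt{[\bR^{-1}]_{ii}}\ge\sum_i\sqrt{\gamma_i}$; and by the Poincar\'e separation theorem applied to the $p\times p$ matrix $\bC^{-1}$ and the semi‑unitary $\bU_A$, the ordered eigenvalues $\rho_i$ of $\bR$ satisfy $\rho_i\le 1/\lambda_{c,p+1-i}$, hence $\gamma_i=1/\rho_{Nr+1-i}\ge\lambda_{c,p-Nr+i}$ and $\sum_i\sqrt{\gamma_i}\ge\sum_{k=p-Nr+1}^{p}\sqrt{\lambda_{c,k}}$. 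Chaining these with the previous step gives $\tr\big((\bA^T\bC^{-1}\bA)^{-1}\big)\ge\tfrac1P\big(\sum_{k=p-Nr+1}^{p}\sqrt{\lambda_{c,k}}\big)^2$ for every feasible $\bA$. Finally I would check that $\bU_A=[\bU_C]_{p-Nr+1:p}$ attains this: then $\bC^{-1}\bU_A=\bU_A\bD^{-1}$, so $\bR=\bD^{-1}$ is diagonal, the Schur and Poincar\'e inequalities hold with equality, and the Cauchy--Schwarz equality condition with a tight power constraint forces $\bSig_A^2=\mu^{-1/2}\bD^{1/2}$, $\mu=\tr(\bD^{1/2})^2/P^2$ (and $\bV_A$ any orthogonal matrix), as in \eqref{general optimal condition1 c}; substituting back yields the minimal value $\tfrac1P\tr(\bD^{1/2})^2=\tfrac1P\big(\sum_{k=p-Nr+1}^{p}\lambda_{c,k}^{1/2}\big)^2$. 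The only nontrivial ingredient is the joint use of the Poincar\'e separation theorem and Schur--Horn majorization in the $\bU_A$‑step; the rest is bookkeeping with the SVD and one application of Cauchy--Schwarz.
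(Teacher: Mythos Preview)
Your argument is correct and complete. The paper, by contrast, gives essentially no proof: it asserts that \eqref{general ray1 c} is a ``standard differentiable convex optimization problem'' and that the KKT conditions deliver \eqref{general optimal condition1 c}, then omits all computation. So the two approaches are genuinely different.

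The paper's route, had the details been written out, would proceed by forming the Lagrangian $L(\bA,\mu)=\tr\big((\bA^T\bC^{-1}\bA)^{-1}\big)+\mu\big(\tr(\bA^T\bA)-P\big)$, setting the gradient in $\bA$ to zero, and reading off that $\bU_A$ must diagonalize $\bC^{-1}$ on an $Nr$-dimensional invariant subspace while $\bSig_A^2$ satisfies a water-filling-type relation; one then argues (typically by comparing stationary points) that the subspace corresponding to the smallest eigenvalues of $\bC$ is the global minimizer. This is efficient but leans on the convexity of $\bA\mapsto\tr\big((\bA^T\bC^{-1}\bA)^{-1}\big)$, which the paper does not justify and which is not entirely obvious.

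Your approach avoids convexity and KKT altogether: the SVD reduction isolates $\bV_A$ as irrelevant, Cauchy--Schwarz handles the power allocation over $\bSig_A$ exactly, and the combination of Schur--Horn majorization with the Poincar\'e separation theorem gives a clean two-line lower bound on $\sum_i\sqrt{[\bR^{-1}]_{ii}}$ that is sharp at $\bU_A=[\bU_C]_{p-Nr+1:p}$. This is more elementary, fully self-contained, and establishes global optimality directly rather than via first-order conditions. The trade-off is that it requires knowing two classical eigenvalue inequalities, whereas the KKT route is more mechanical once one accepts the convexity claim.
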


\begin{proof}
The problem \eqref{general ray1 c} is a standard differentiable convex optimization problem. This can be optimally solved by applying Karush-Kuhn-Tucker (KKT) conditions \cite{boyd}. We omit the details because it is a standard procedure. \hfill \qed
\end{proof}

\subsection{Optimal representation of an estimator}
In the previous section, we assume that the estimate of $\bL$ has the form $\widehat{\bL} = \bF \widehat{\bQ}$.
In this subsection, we investigate the optimal representation of the low-rank matrix estimate given the noisy affine map in (2), which can achieve the minimum MSE.

The following lemma shows that the estimation error will not increase if we project any estimation result onto the column subspace of the low-rank matrix.

\begin{lemma} \label{lemma2}
Suppose that $\bL \in \R^{M \times N}$ is a low-rank matrix and has a decomposition $\bL  = \bF \bQ$, where $\bF \in \R^{M \times r}$ with $\bF^T \bF = \bI_r$.
Given the observation $\by = \cA(\bL) + \bn$ with $\E(\bn \bn^T) = \bC$,
the following inequality holds
\beq
\lA \tilde{\bL}  - \bL \rA_F^2 \ge   \lA \bF \bF^T \tilde{\bL} - \bL \rA_F^2,
\eeq
where $\tilde{\bL}$ is the arbitrary estimate of $\bL$.
\end{lemma}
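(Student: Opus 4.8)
The plan is to recognize that, despite the statement being phrased in the language of the observation model $\by = \cA(\bL)+\bn$, the claimed inequality is purely deterministic: it holds for an \emph{arbitrary} fixed matrix $\tilde{\bL}$ and uses nothing about $\cA(\cdot)$ or the covariance $\bC$. So I would drop the probabilistic dressing and prove a Pythagorean estimate. Set $\bP = \bF\bF^T \in \R^{M\times M}$, the orthogonal projector onto $\text{col}(\bF)$; since $\bF^T\bF = \bI_r$, the matrix $\bP$ is symmetric and idempotent, $\bP^T=\bP$ and $\bP^2=\bP$. The first observation I would record is that $\bL$ itself lies in the range of $\bP$: because $\bL = \bF\bQ$,
\[
\bP\bL = \bF\bF^T\bF\bQ = \bF\bQ = \bL .
\]

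Next I would split the error $\tilde{\bL}-\bL$ into its component in $\text{col}(\bF)$ and its component in the orthogonal complement. Using $\bP\bL=\bL$, we get $(\bI_M-\bP)(\tilde{\bL}-\bL) = (\bI_M-\bP)\tilde{\bL}$, hence
\[
\tilde{\bL}-\bL \;=\; \bP(\tilde{\bL}-\bL) \;+\; (\bI_M-\bP)\tilde{\bL} .
\]
These two terms are orthogonal under the Frobenius inner product, since for any matrices $\bM_1,\bM_2$ one has $\tr\!\big(\bM_1^T\bP^T(\bI_M-\bP)\bM_2\big) = \tr\!\big(\bM_1^T\bP(\bI_M-\bP)\bM_2\big) = 0$ because $\bP(\bI_M-\bP)=\bzeros$. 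Applying the Pythagorean identity for $\lA\cdot\rA_F$ and then discarding the nonnegative second term,
\[
\lA \tilde{\bL}-\bL \rA_F^2 \;=\; \lA \bP(\tilde{\bL}-\bL) \rA_F^2 + \lA (\bI_M-\bP)\tilde{\bL} \rA_F^2 \;\ge\; \lA \bP\tilde{\bL}-\bP\bL \rA_F^2 \;=\; \lA \bF\bF^T\tilde{\bL}-\bL \rA_F^2,
\]
where the last step again invokes $\bP\bL=\bL$. This is exactly the asserted inequality.

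I do not expect any genuine obstacle; the one point that must be handled correctly is that $\bL$ (not merely $\tilde{\bL}$) belongs to $\text{col}(\bF)$ — this is precisely what the hypothesis $\bL=\bF\bQ$ with $\bF^T\bF=\bI_r$ guarantees, and it is what makes the cross term vanish and the projection of $\tilde{\bL}$ land no farther from $\bL$. Equivalently, one could phrase the whole argument as the single identity $\lA\bX\rA_F^2 = \lA\bP\bX\rA_F^2 + \lA(\bI_M-\bP)\bX\rA_F^2$ applied to $\bX = \tilde{\bL}-\bL$, together with $\bP(\tilde{\bL}-\bL) = \bF\bF^T\tilde{\bL}-\bL$.
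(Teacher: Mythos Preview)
Your proof is correct and follows essentially the same approach as the paper: both introduce the orthogonal projector $\bP_{\bF}=\bF\bF^T$, decompose $\tilde{\bL}-\bL$ (equivalently $\tilde{\bL}$) into its components in $\text{col}(\bF)$ and its orthogonal complement, and apply the Pythagorean identity for the Frobenius norm together with $\bP_{\bF}\bL=\bL$. Your write-up is slightly more explicit in justifying the vanishing cross term and the role of $\bL=\bF\bQ$, but the argument is the same.
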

\begin{proof}
It is worth noting that $\bF \bF^T$ denotes the projection matrix for the column subspace of $\bL$.
Let us first define $\bP_{\bF} = \bF\bF^T$, $\bP_{\bF_\perp} =\bI_M - \bP_{\bF}$,
then the estimation error of $\tilde{\bL}$ is given by
\beq
\begin{split}
\lA \tilde{\bL}  - \bL \rA_F^2 &= \lA \bP_{\bF} \tilde{\bL}   +   \bP_{\bF_\perp} \tilde{\bL} - \!\! \bL \rA_F^2\\
& =  \lA \bP_{\bF} \tilde{\bL}  - \bL \rA_F^2 \!\!+\!\!\lA \bP_{\bF_\perp} \tilde{\bL} \rA _F^2\\
& \ge \lA \bP_{\bF} \tilde{\bL}  - \bL \rA_F^2. \nonumber
\end{split}
\eeq
This concludes the proof. \hfill \qed
\end{proof}

The lemma concludes that the optimal estimation should have the form $\widehat{\bL} = \bP_{\bF} \tilde{\bL} $ in order to further minimize the estimation error. In other words, the column subspace of $\widehat{\bL}$ should lie on the column subspace of $\bL$,
$\text{col} (\bF_1) \subseteq \text{col} (\bF)$
where we denote the column subspace of estimate $\widehat{\bL}$ as $\text{col}(\bF_1)$ and $\bF_1 \in \R^{M \times d}$ with $\bF_1^T \bF_1 =\bI_d$, $d\le r$.

Given $\bF_1$, we rewrite the true low-rank matrix in \eqref{form1 c} as
\beq
\bL = \bF \bQ
=  \begin{bmatrix}
        \bF_1 & \bF_2
      \end{bmatrix}
  \begin{bmatrix}
    \bQ_1  \\
   \bQ_2
  \end{bmatrix}
   \label{matrix sep c}
\eeq
where $\bF = [\bF_1, \bF_2]$ is semi unitary $\bF^T \bF = \bI_r$ and $\bQ_1 \in \R^{d \times N}$, and $\bQ_2 \in \R^{(r-d) \times N}$.
With the column subspace of $\widehat{\bL}$ being $\text{col}(\bF_1)$,  we obtains the estimate as
\beq
\widehat{\bL} = \bF_1 \widehat{\bQ}_1,
\eeq
where $\widehat{\bQ}_1 \in \R^{d \times N}$. Then, the observation $\by$ in \eqref{observations} satisfies the following
\beq
\by &=& \cA(\bF_1 {\bQ}_{1} ) +  \cA(\bL - \bF_1 {\bQ}_{1} ) + \bn \nonumber \\
&=&\cA(\bF_1 {\bQ}_{1} ) +   \tilde{\bn}\nonumber \\
&=& \bS(\bI \otimes \bF_1)\vec(\bQ_{1}) + \tilde{\bn}\nonumber \\
&=& \bA_1 \vec(\bQ_1) + \tilde{\bn}, \label{observation2 c}
\eeq
where $\tilde{\bn} =\cA(\bL - \bF_1 {\bQ}_{1} ) + \bn $, and $\bA_1 = \bS(\bI \otimes \bF_1) \in \R^{p \times dN}$.
The effective noise covariance can be calculated as
\beq
\bC_1 = \E(\tilde{\bn}\tilde{\bn}^T) = \bA_1 \vec(\bQ_1) \vec(\bQ_1)^T  \bA_1^T  + \bC. \label{new cov}
\eeq
Given the observation \eqref{observation2 c}, the efficient estimator of ${\bQ}_1$ yields
\beq
\vec(\widehat{\bQ}_{1})=(\bA_1^T \bC_1^{-1}\bA_1)^{-1} \bA_1^T  \bC_1^{-1}\by.
\eeq
 The MSE of $\widehat{\bL}$ is given by
\beq
\!\!\!\!\!\!\!\!\!\!\!\!\!\!\!\!\E\left(\lA  \widehat{\bL} - \bL\rA_F^2\right) \!\!\!\!\!\!\!&=&\!\!\!\!\E\left( \lA \bF_1 \widehat{\bQ}_{1}  - \bL\rA_F^2 \right) \nonumber \\
\!\!\!\!\!\!\!\!&=&\!\!\!\!\E\left( \lA  \widehat{\bQ}_1  \!\!- \!\! \bQ_1\rA _F^2 \!\!\right) + \! \lA \bL \rA_F^2   \!- \! \lA {\bQ}_{1}  \rA_F^2\nonumber\\
\!\!\!\!\!\!\!\!&=& \!\!\!\!\tr(\bA_1^T \bC_1^{-1} \bA_1)^{-1}   \!+ \! \lA \bL \rA_F^2   \!\!- \!\! \lA {\bQ}_{1}  \rA_F^2.
\eeq
In order to minimize the MSE above, we solve the following optimization problem in terms of the affine map,
\beq
\!\!\!\! \min \limits_{\bS,\bF_1}  \tr(\bA_1^T \bC_1^{-1} \bA_1)^{-1}  \!\!+ \!\! \lA \bL \rA_F^2   \!- \! \lA {\bQ}_1  \rA_F^2 \nonumber \\
\text{subject to } \bA_1 = \bS (\bI \otimes \bF_1),~ \lA \bS \rA_F^2 \le P.
\label{MSE expression A c}
\eeq

The following Lemma analyzes optimality conditions for  $\bF_1$ and $\bS$ to achieves the minimal MSE.

\begin{lemma} \label{lemma4}
The optimal $\bF_1 \in \R^{M \times d}$  and  $\bS$ that achieve the minimal MSE in \eqref{MSE expression A c} satisfy the following conditions:

(1) The optimal rank of the estimate satisfies
\beq
d_{opt}=\argmin \limits _d \left( \frac{1}{P}\left(\sum_{k=p-Nd+1}^{p} \lambda_{c_1,k}^{\frac{1}{2}}\right)^2 +\sum_{k=d+1}^{r}\lambda_{k}^2 \right), \label{column MSE}
\eeq
where $ \lambda_{c_1,k}$ is the $k$th largest singular value of $\bC_1$, and $ \lambda_{k}$ is the $k$th largest singular value of $\bL$.

(2) $\bF_1$ spans the subspace of the dominant $d_{opt}$ left  singular vectors of $\bL$.

(3) The optimal ${\bS}$ is  $\widehat{\bS} = \widehat{\bA}_1 (\bI \otimes \bF_1)^T$, where $\widehat{\bA}_1 \in \R^{p \times dN}$ is the solution of \eqref{general ray1 c} in Lemma \ref{augL} by substituting $\bC$ with  $\bC_1$ in \eqref{new cov}.
\end{lemma}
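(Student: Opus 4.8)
The plan is to decouple the optimization in \eqref{MSE expression A c} into an inner problem over the affine map $\bS$ (for a fixed column subspace $\bF_1$ and fixed target rank $d$) and an outer problem over $\bF_1$ and $d$, settling the inner problem with Lemmas~\ref{lemma1} and~\ref{augL} and the outer one with the Ky~Fan extremal principle. First observe that, once $\bF_1$ is fixed, the underlying coefficient block is $\bQ_1=\bF_1^T\bL$ (from \eqref{form1 c} together with $\bF_1^T\bF_2=\mathbf{0}$), so semi-unitarity of $\bF=[\bF_1,\bF_2]$ gives $\lA\bL\rA_F^2=\lA\bQ_1\rA_F^2+\lA\bQ_2\rA_F^2$ and hence the bias part of the objective equals $\lA\bL\rA_F^2-\lA\bQ_1\rA_F^2=\lA\bF_2^T\bL\rA_F^2$. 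Thus \eqref{MSE expression A c} separates into the variance term $\tr(\bA_1^T\bC_1^{-1}\bA_1)^{-1}$, the only part that depends on $\bS$, plus the $\bS$-free bias term $\lA\bF_2^T\bL\rA_F^2$.

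For part~(3): with $\bF_1$ held fixed and the effective covariance $\bC_1$ of \eqref{new cov} treated as given, minimizing $\tr(\bA_1^T\bC_1^{-1}\bA_1)^{-1}$ subject to $\bA_1=\bS(\bI\otimes\bF_1)$ and $\lA\bS\rA_F^2\le P$ is precisely the problem of Lemmas~\ref{lemma1} and~\ref{augL} under the substitution $(\bF,\bC,r)\mapsto(\bF_1,\bC_1,d)$. Lemma~\ref{lemma1} (via $(\bI\otimes\bF_1)^T(\bI\otimes\bF_1)=\bI_{dN}$) reduces it to the free problem \eqref{general ray1 c} over $\bA_1$ with $\tr(\bA_1^T\bA_1)\le P$, whose minimizer $\widehat{\bA}_1$ and optimal value $\tfrac1P\big(\sum_{k=p-Nd+1}^{p}\lambda_{c_1,k}^{1/2}\big)^2$ are furnished by Lemma~\ref{augL}; the corresponding optimal $\bS$ is $\widehat{\bS}=\widehat{\bA}_1(\bI\otimes\bF_1)^T$ by \eqref{eq_con11 c}, which is~(3).

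For parts~(2) and~(1): inserting the inner optimum leaves the objective $\tfrac1P\big(\sum_{k=p-Nd+1}^{p}\lambda_{c_1,k}^{1/2}\big)^2+\lA\bF_2^T\bL\rA_F^2$. Since $\lA\bF_2^T\bL\rA_F^2=\lA\bL\rA_F^2-\lA\bF_1^T\bL\rA_F^2$, the Ky~Fan / Eckart--Young maximum principle shows it is minimized over semi-unitary $\bF_1\in\R^{M\times d}$ with $\text{col}(\bF_1)\subseteq\text{col}(\bL)$ exactly when $\bF_1=[\bu_1,\dots,\bu_d]$ spans the $d$ dominant left singular vectors of $\bL$, the minimum being $\sum_{k=d+1}^{r}\lambda_k^2$; this is~(2) and accounts for the second summand of \eqref{column MSE}. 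With $\bF_1$ and $\bS$ now optimal, the objective depends only on the integer $d\in\{1,\dots,r\}$, so the finite minimization \eqref{column MSE} yields $d_{opt}$, giving~(1).

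The delicate point --- and the main obstacle --- is that $\bF_1$ also enters the variance term through $\bC_1$ in \eqref{new cov}, so one must verify that the top-$d$ subspace, optimal for the bias, does not degrade the variance. The resolution is that the optimal design annihilates the modeling residual $\bL-\bF_1\bQ_1=\bF_2\bQ_2$: from \eqref{eq_con11 c}, $\widehat{\bS}\,\vec(\bF_2\bQ_2)=\widehat{\bA}_1(\bI\otimes\bF_1)^T(\bI\otimes\bF_2)\vec(\bQ_2)=\widehat{\bA}_1\big(\bI\otimes(\bF_1^T\bF_2)\big)\vec(\bQ_2)=\mathbf{0}$, so the correction in \eqref{new cov} vanishes and $\bC_1=\bC$ along the optimal trajectory; consequently the variance term at the optimum is independent of the particular $\bF_1$, and only the bias term discriminates, which closes the argument. (Alternatively, keeping $\bC_1\neq\bC$, one can invoke monotonicity: the top-$d$ choice minimizes $\lA\bQ_2\rA_F^2$, hence makes the positive-semidefinite correction in \eqref{new cov} smallest in the Loewner order, hence makes every $\lambda_{c_1,k}$, and therefore the variance term, smallest --- though one must be careful that $\bA_1$, and thus the correction, depends on the design as well.)
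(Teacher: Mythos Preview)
Your proof follows exactly the paper's two-stage decoupling---inner minimization over $\bS$ via Lemmas~\ref{lemma1} and~\ref{augL}, then outer optimization over $\bF_1$ (top-$d$ singular vectors) and finally over $d$---and reaches the same conclusions. You are in fact more careful than the paper on the point you flag as delicate: the paper simply asserts that the optimized variance term ``only relies on the dimension of $\bF_1$, i.e., $d$'', whereas your annihilation argument $\widehat{\bS}\,\vec(\bF_2\bQ_2)=\mathbf{0}$ supplies the missing justification (and can be made fully rigorous by noting $\bC_1\succeq\bC$ for every feasible $\bS$, so $\tr(\bA_1^T\bC_1^{-1}\bA_1)^{-1}\ge\tr(\bA_1^T\bC^{-1}\bA_1)^{-1}$, a lower bound attained precisely at the annihilating design).
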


\begin{proof}
Since there are two optimization variables: $\bF_1 $ and $\bS$, we will solve the whole optimization problem in two steps, where these two steps can be described by using the problem
\beq
\!\!\!\! \min \limits_{\bF_1} \left( \min \limits_\bS\tr(\bA_1^T \bC_1^{-1} \bA_1)^{-1}  \!\!+ \!\! \lA \bL \rA_F^2   \!\!- \!\! \lA {\bQ}_{1}  \rA_F^2\right) \nonumber \\
\text{subject to } \bA_1 = \bS (\bI \otimes \bF_1),~ \lA \bS \rA_F^2 \le P. \label{problem2 c}
\eeq
The first step is to design $\bS$ for any $\bF_1$. The second step is to optimize $\bF_1$ that minimizes the whole objective function.

First of all, for any arbitrary $\bF_1$, the value  $(\lA \bL \rA_F^2  \!\!- \!\! \lA {\bQ}_1 \rA_F^2)$ is constant due to $\bQ_1 = \bF_1^T \bL$, simplifying \eqref{problem2 c} to
\beq
&\min \limits_\bS\tr(\bA_1^T \bC_1^{-1} \bA_1)^{-1}  \nonumber \\
&\text{subject to } \bA_1 = \bS (\bI \otimes \bF_1),~ \lA \bS \rA_F^2 \le P. \label{form2 c}
\eeq
The objective value $\tr(\bA_1^T \bC_1^{-1} \bA_1)^{-1}$ attains the minimum  when $\bS$ is calculated according to Lemma \ref{lemma1} and Lemma \ref{augL}, yielding
\beq
 \frac{1}{P}\left(\sum_{k=p-Nd+1}^{p} \lambda_{c_1,k}^{\frac{1}{2}}\right)^2. \nonumber
\eeq

Now the problem in \eqref{problem2 c} is over the $\bF_1$,
\beq
\min \limits _{\bF_1} \frac{1}{P}\left(\sum_{k=p-Nd+1}^{p} \lambda_{c_1,k}^{\frac{1}{2}}\right)^2+ \lA \bL \rA_F^2  - \lA {\bQ}_1 \rA_F^2. \label{optimize d c}
\eeq
\begin{figure}[t]
\centering
\includegraphics[width=3.5 in]{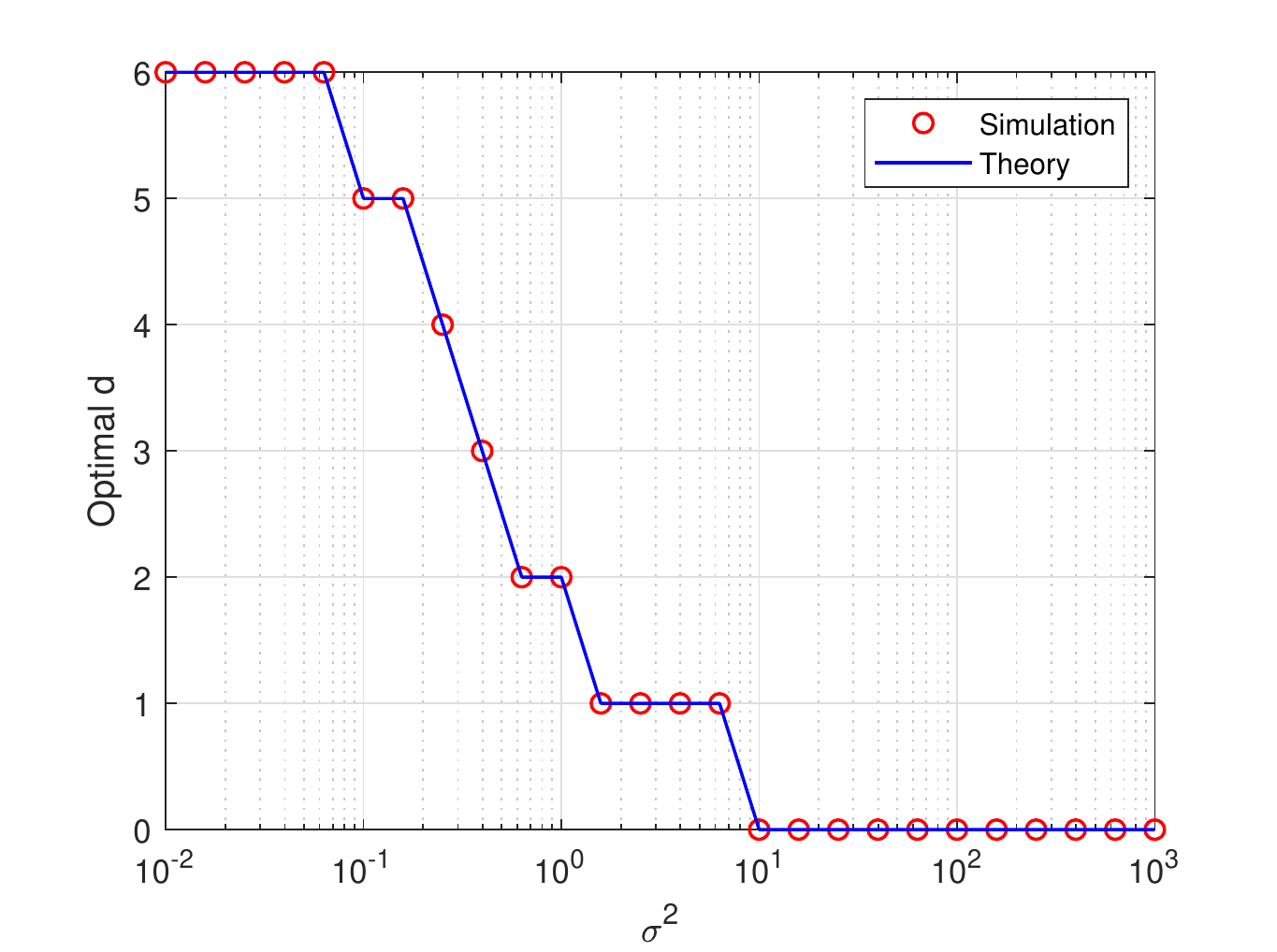}
\caption{ Optimal $d$ vs. Noise levels ($M = 20, N = 50, r=6$).} \label{optimal d c}
\end{figure}

\noindent
It is worth noting that the first part of \eqref{optimize d c} only relies on the dimension of $\bF_1$, i.e., $d$.
In other words, the value of the first part is a constant for any fixed $d$.
However, for any given $d$, the value of the second part (i.e.,$\lA \bL \rA_F^2  - \lA {\bQ}_{1} \rA_F^2$) can be minimized when $\bF_1$ spans the subspace of the dominant $d$ left singular vectors of $\bL$.
Then, we obtain the MSE in terms of $d$ as
\beq
\text{MSE}(d)=\frac{1}{P}\left(\sum_{k=p-Nd+1}^{p} \lambda_{c_1,k}^{\frac{1}{2}}\right)^2 +\sum_{k=d+1}^{r}\lambda_{k}^2.
\label{eq temp3 c}
\eeq
To minimize \eqref{eq temp3 c}, the $d$ should be chosen by
\beq
d_{opt}= \min \limits _d  \text{MSE}(d), \nonumber
\eeq
which concludes the proof. \hfill \qed
\end{proof}

In order to understand the Lemma \ref{lemma4} in detail, we analyze and illustrate the scenario when the noise is i.i.d. Gaussian such as $\E(\bn \bn^T) = \sigma^2\bI_p$. In this case, the MSE over $d$ is expressed as
\beq
\frac{d^2 N^2}{P}\sigma^2 +\sum_{k=d+1}^{r}\lambda_{k}^2,
\label{eq temp2 c}
\eeq
 Because $\lambda_{k}$ is monotonically decreasing with $k$,  the optimal $d$ that minimizes \eqref{eq temp2 c} should satisfy the following
\beq
 d_{opt} &=& \argmin \limits _{d=0,1\cdots,r} \left( \frac{d^2 N^2}{P}\sigma^2 +\sum_{k=d+1}^{r}\lambda_{k}^2\right). \label{con2 c}
\eeq
Observing \eqref{con2 c},  the value of first part is increasing over $d$, while the value of second part is decreasing over $d$.
This means that when the noise level is high, such as a large $\sigma^2$, we will have a small $d_{opt}$ for fixed $P$ and $\{ \lambda_k\}_{k=1}^r$.
When the noise level is low, the value of $d_{opt}$ will approach $r$, which is the rank of matrix $\bL$.

In Fig. \ref{optimal d c}, we plot the optimal values of $d$ which achieve the smallest MSEs for different noise levels $\sigma^2$. The line of theoretical optimal $d$ is calculated according to \eqref{con2 c}. It is clear that the theoretical $d_{opt}$ well matches with the simulation results.
Specifically, when the noise level is low, the $d_{opt}$ is equal to the rank of matrix, i.e., $r$.
When the noise level is high, $d_{opt}$  approaches to zero, which is consistent with our analysis.

\section{Two-step low-rank matrix reconstruction} \label{two step}

As we discussed in the previous section, when the column subspace of low-rank matrix is known in advance, the affine map can be attained as we analyzed in Section \ref{algorithm}.
In practice, however, the column subspace of $\bL$ is often not known as a priori, in which we have to first estimate the column subspace and then using this estimated priori information to finish the estimation of the low-rank matrix.

In this section, we propose a two-step method to reconstruct $\bL$, which consists of column subspace learning and coefficient matrix learning, where we assume the powers for these two steps are $P_1$ and $P_2$ with $P_1+P_2 = P$.

\subsection{Column subspace learning}
For a matrix $\bL \in \R^{M \times N}$ with rank $r$, the column subspace of $\bL$, i.e., $\text{col}(\bL)$, can be expressed by the a semi-unitary matrix $\bU_p \in \R^{M \times r}$ such that $\text{col}(\bL) =\text{col}(\bU_p) $. Here, we denote $\bU_p$ as the column subspace matrix of $\bL$.
It is worth noting that the column subspace matrix of a given $\bL$ is not unique.
In particular, the left singular vectors of $\bL$ is a candidate of column subspace matrix for $\bL$.

Denote $\bL_1 \in \R^{M \times m}$ as the sub-matrix of $\bL$, which is generated by selecting $m$ columns of $\bL$. If there are $r$ columns in $\bL_1$ are linearly independent, we will have $\text{col}(\bL) =\text{col}(\bL_1)$.
In general, however, we can not guarantee that subspace equality between $\bL_1$ and $\bL$.
Fortunately, according to \cite[Lemma $2$]{HighDimension},
if we randomly select $m$ columns of $\bL$ as $\bL_1$, and $m~(\ge r)$ is large enough, the selected columns of matrix $\bL_1$ span the column subspace of $\bL$ with high probability.

In more detail, we let $\bY_1 \in \R^{M \times m}$ be the observations under noise, i.e.,
\beq
\bY_1 =  \bL \bZ_1 + \sqrt{{m M}/{P_1 } } \bW_1  =\bL_1 + \bW_s , \label{expression L1}
\eeq
where $\bZ_1 \in \R^{N \times m}$ is the column sampling matrix, which randomly selects $m$ columns from $\bL$,  and $\bW_1 \in \R^{M \times m}$ is reshaped the noise matrix with $[\bW_1]_{i,j}\sim \cN(0,\sigma^2)$.
Note that the number of observations to obtain $\bY_1$ in the first step is $p_1 = m M$. The remaining question is to obtain the column subspace of $\bL_1$ from the observations $\bY_1$.

Considering that $\bL_1$ is also a low-rank matrix when $m \gg r$, it is intuitive to estimate $\bL_s$ by solving the following nuclear norm and $\ell$-norm minimization problem \cite{HighDimension},
\beq
\min \limits_{\bL_1} \lA \bL_1 \rA_* + \varepsilon\lA \bY_1 - \bL_1 \rA_{\ell} , \label{compact form}
\eeq
where $\ell$ denote different norm operation determined by the type of noise $\bW_s$ in \eqref{compact form}, and  $\varepsilon$ is the trade-off parameter.

However, the column subspace learning approach above requires $m \gg r$, which needs $mM$ observations in total. Alternatively, we put a more straightforward but rather robust method.
Considering the observations $\bY_1 = \bL_1 + \bW_s$, when the elements in $\bW_s$ are sufficiently small, the column subspace of $\bL_1$ can be approximated by the column subspace of $\bY_1$.
Suppose the singular value decomposition of $\bL_1$ is given by,
\beq
\bL_1 = \bU_p \boldsymbol{\Sigma}_p  \bV_p^T, \label{svd of Ls}
\eeq
where the left singular vector matrix $\bU_p \in \mathbb{R}^{M \times r}$, the right singular vector
matrix $\bV_p \in \mathbb{R}^{N \times r}$, and $\boldsymbol{\Sigma}_p$ is a diagonal matrix where the singular values are sorted in a descending order.

Because of the noise matrix $\bW_s$, the matrix $\bY_1$ has full rank with probability one, and the singular value decomposition of $\bY_1$ can be expressed as,
\beq
\bY_1 =\widehat{\bU} \widehat{\boldsymbol{\Sigma}} \widehat{\bV}^T  =
\begin{pmatrix} \widehat{\bU}_p & \widehat{\bU}_o \end{pmatrix}
\begin{pmatrix} \widehat{\boldsymbol{\Sigma}}_p & \bold{0}\\ \bold{0} & \widehat{\boldsymbol{\Sigma}}_o \end{pmatrix}
\begin{pmatrix} \widehat{\bV}_p^T \\ \widehat{\bV}_o^T \end{pmatrix} \label{svd subspace}
\eeq
where $\widehat{\bU}_p \in \mathbb{R}^{M \times r}$,  $\widehat{\bV}_p \in \mathbb{R}^{N \times r}$,
$\widehat{\bU}_o \in \mathbb{R}^{M \times (m-r)}$, $\widehat{\bV}_o \in \mathbb{R}^{N \times (m-r)}$. The diagonal matrices $\widehat{\boldsymbol{\Sigma}}_p$ and $\widehat{\boldsymbol{\Sigma}}_o$ include singular values in a descending order.

When the elements in error matrix $\bW_s$ are small compared with the $\bL_1$, we can treat $\text{col}(\widehat{\bU}_p)$ as the estimation for the column subspace of $\bL$, i.e., $\text{col}({\bU}_p)$ in \eqref{svd of Ls}.
In the following sections, we will discuss the subspace estimation accuracy in more details.

\begin{remark}
When we have no idea about the rank of the matrix $\bL$,
we can estimate it through the observations $\bY_1$ in the first step.
When the matrix $\bL_1$ is zero, the matrix $\bY_1$ only contains noise. Moreover, the singular values of $\bY_1$ are supported on $\left[\sigma \sqrt{M} (1-{\sqrt{m/M}}),\sigma \sqrt{M} (1+{\sqrt{m/M}})\right]$ as $M \rightarrow \infty$ \cite{RandomMatrix}. It inspires us to use the threshold $\tau =\sigma \sqrt{M} (1+{\sqrt{m/M}})$ to determine the rank estimation, namely
\beq
\hat{r} = \max\{i:\hat{\lambda}_i \ge \tau \}, \label{rank estimate}
\eeq
where $\hat{\lambda}_i$ is the $i$th largest singular value of $\bY_1$. Moreover, when all the singular values of $\bY_1$ is smaller that $\tau$, we just let $\hat{r}=0$. In the simulation part, we show that the similar performance can be achieved by using the estimation of rank $\hat{r}$.

\end{remark}

\begin{algorithm} [t]
\caption{Two-step method for reconstruction of low-rank matrices}
\label{alg1}
\begin{algorithmic} [1]
\STATE Input: The data matrix $\bL \in \R^{M \times N}$ and the power $P$.
\STATE Initialization: The column selecting matrix $\bZ_1 \in\R^{ N \times m}$, powers for two steps are $P_1$ and $P_2$ with $P = P_1 +P_2$, and the permutation matrix $\bP \in \R^{N \times N}$.
\STATE Column subspace learning:
\begin{itemize}
  \item Sample the columns of  $\bL$ by $\bY_1 = \bL \bZ_1 + \sqrt{mM/P_1} \bW_1 = \bL_1 + \bW_s$ with noise matrix $\bW_1$.
  \item Calculate the column subspace basis matrix $\widehat{\bU}_p$ through  \eqref{svd subspace} and $\widehat{\bQ}_1$ in \eqref{Q1 estimation}.
\end{itemize}

\STATE Calculate the coefficient matrix:
\begin{itemize}
  \item Let the reshaped affine map of $\cA_2(\cdot)$ be $\bS_2 = \sqrt{\frac{P_2}{r(N-m)} } \bI_{r(N-m)}\left(  \bI \otimes \widehat{{\bU}}_p \right)^T$\!\!\!\!, obtain the observations through $\by_2 = \cA_2 (\bL_2) +\bn_2$.
  \item  Calculate the estimation of coefficient matrix $\widehat{\bQ}_2$ through \eqref{Q2 estimation}.
\end{itemize}

\STATE Obtain the estimate of the LR matrix: $\widehat{\bL} = \widehat{\bU}_p [\widehat{\bQ}_1,\widehat{\bQ}_2 ]\bP$.
\STATE Output: Estimate result $\widehat{\bL}$.
\end{algorithmic}
\end{algorithm}
\subsection{Coefficient matrix learning}
Given the expression of $\bL_1$ in \eqref{expression L1}, we denote $\bL_2 \in \R^{M \times (N-m)}$ as the sub-matrix of $\bL$ which collects the remaining $(N-m)$ columns of $\bL$, such that
$ \bL_2 = \bL \bZ_2$,
 where $\bZ_2 \in \R^{N \times (N-m)}$ is the column sampling matrix. Then, we have the following relationship between $[\bL_1, \bL_2]$ and $\bL$,
\beq
[\bL_1, \bL_2] = \bL [\bZ_1, \bZ_2]. \label{re 1 2}
\eeq
Given the expression in \eqref{svd of Ls}, \eqref{re 1 2} and the fact that the column subspace of $\bL$ is the same as that of $\bL_1$, thus, $ [\bL_1, \bL_2] $ can be represented as
\beq
[\bL_1 , \bL_2] = {\bU_p} \bQ
\eeq
where $\bQ \in \R^{r \times N}$ is the coefficient matrix of $[\bL_1 , \bL_2]$  associated with the column subspace matrix ${\bU}_p$.

\begin{remark}
For the coefficient matrix $\bQ$, we can rewrite $\bQ=[\bQ_1, \bQ_2]$, where $\bQ_1 \in \R^{r \times m}$ and $\bQ_2 \in \R^{r \times (N-m)}$ are the coefficient matrices of $\bL_1$ and $\bL_2$, respectively.
Therefore, the matrix $\bL$ can be expressed as
\beq
\bL &= &[\bL_1, \bL_2][\bZ_1, \bZ_2]^{-1}\nonumber \\
&= &[\bL_1, \bL_2]\bP\nonumber \\
&= &\bU_p [\bQ_1, \bQ_2] \bP, \label{true per}
\eeq
where $\bP = [\bZ_1, \bZ_2]^{-1} \in \R^{N \times N} $ can be treated as the permutation matrix and it is invertable.
Since we have already observed the $m$ columns of $\bL$,
we can get the estimation of $\bQ_1$ from the first step as
\beq
\widehat{\bQ}_1 = \widehat{\bU}_p^T \bY_1 = \widehat{\bU}_p^T\bL_1 +\sqrt{{mM}/{P_1}}\widehat{\bU}_p^T \bW_1. \label{Q1 estimation}
\eeq
Therefore, we only focus on the estimation for $\bQ_2$ in the second step.
\end{remark}

By denoting the affine map in the second step is $\cA_2(\cdot)$ and $p_2$ is the number of observations in the second step, the observation $\by_2 \in \R^{p_2}$ in the second step can be expressed as
\beq
\by_2 &=& \cA_2(\bU_p\bQ_2)+\bn_2 \nonumber \\
 &= & \bS_2 (\bI \otimes \bU_p) \vec(\bQ_2)+\bn_2, \label{ob second}
\eeq
where $\bn_2 \in \R^{p_2}$ with $\mathbb{E}(\bn_2 \bn_2^T)=\bC_2$, and $\bS_2 \in \R^{p_2 \times MN}$ is generated by reforming $\cA_2(\cdot)$. In the following part, we will discuss how to obtain the estimation for $\bQ_2$ and the design of the affine map $\cA_2(\cdot)$.

Given the estimated $\widehat{\bU}_p$ from the first step, we assume the expression of $\widehat{\bL}_2$ as
$
\widehat{\bL}_2 = \widehat{\bU}_p \widehat{\bQ}_2,
$
where $\widehat{\bQ}_2 \in \R^{r \times (N-m)}$.
Based on the analysis in Section \ref{algorithm}, under the observations $\by_2$ in \eqref{ob second}, the  efficient estimation of $\widehat{\bQ}_2$ is given by
\beq
\vec(\widehat{\bQ}_2) = (\widehat{\bA}_2^T \bC_2^{-1} \widehat{\bA}_2)^{-1} \widehat{\bA}_2^T \bC_2^{-1}\by_2, \label{Q estimate}
\eeq
where $\widehat{\bA}_2=\bS_2 (\bI \otimes \widehat{\bU}_p) \in \bR^{p_2 \times r(N-m)}$.
Here, in the second step, we assume the difference between $\text{col}(\widehat{\bU}_p)$ and $\text{col}(\bU_p)$ is small. Ideally, when $\text{col}(\widehat{\bU}_p)=\text{col}(\bU_p)$, the affine map $\bS_2$ with the minimal MSE can be obtained through the optimality condition in \eqref{general optimal condition1 c}.
Specifically, when $\mathbb{E}(\bn \bn^T)=\sigma^2 \bI_{p_2}$, the optimal $\bS_2$ associated with affine map $\cA_2(\cdot)$ is given by
\beq
\bS_2 = \sqrt{\frac{P_2}{r(N-m)} }
\bI_{r(N-m)}
 \left(  \bI \otimes \widehat{{\bU}}_p \right)^T. \label{optimal S2}
\eeq
Though the design of the affine map above utilize the approximation $\text{col}(\widehat{\bU}_p) \approx \text{col}(\bU_p)$, in the simulation part, we  verify that this approximation is still helpful to improve the reconstruction accuracy.
 Given the affine map in the design in \eqref{optimal S2}, for the coefficient matrix $\widehat{\bQ}_2$, we will have the following estimation by substituting \eqref{optimal S2} into \eqref{Q estimate},
\beq
 \widehat{\bQ}_2 = \widehat{\bU}_p^T \bL_2 + \sqrt{{r(N-m)}/{P_2}} \bW_2
= \widehat{\bU}_p^T\bL_2 +\bW_r,
\label{Q2 estimation}
 \eeq
where $\bW_2 \in \R^{r \times (N-m)}$ is the noise matrix which is reshaped from $\bn_2$ in \eqref{ob second}. The details of the proposed two-step method is stated in Algorithm 1.
The estimation result is given by
$\widehat{\bL} = \widehat{\bU}_p [\widehat{\bQ}_1,\widehat{\bQ}_2 ]\bP$, where $\bP$ is the permutation matrix defined in \eqref{true per}.

\subsection{Estimation error analysis}

Before evaluating the estimation error, the following Lemma provides the accuracy of the estimated column subspace $\text{col}(\widehat{\bU}_p)$ compared to $\text{col}(\bU_p)$. Denote $\bU_{p\bot} \in \R^{M \times (N-r)}$ as the matrix which satisfies $\bU_{p\bot} \bU_{p\bot}^T  = \bI_M - \bU_{p} \bU_{p}^T$.
We introduce the
\beq
\eta=\lA  \widehat{\bU}_{p\bot}^T \bU_p  \rA_2 =  \lA  \widehat{\bU}_{p}^T \bU_{p\bot}  \rA_2 \label{subspace metric}
\eeq
as the subspace distance defined in \cite{matrixcom}. The value of $\eta$ is ranged from $0$ to $1$. When $\text{col}(\widehat{\bU}_p)=\text{col}({\bU}_p)$, we can easily verify $\eta = 0$.

\begin{lemma} \label{subspace bound lemma}

Assume the singular value gap $\delta = \lam_{r} - \widehat{\lam}_{r+1}$, where $\lam_{r}$ is the $r$th largest singular value of $\bL_1 \in \R^{M \times m}$, and  $\widehat{\lam}_{r+1}$ is the $(r+1)$th largest singular value of $\bY_1 \in \R^{M \times m}$ with $\bY_1 = \bL_1 + \bW_s$. According to the Wedin's Theorem \cite{li1998relative, stewart2006perturbation}, the following bound holds
\beq
\lA  \widehat{\bU}_{p}^T \bU_{p\bot}  \rA_2 \!\!& \le &\frac{\max \{\lA \bW_s \bV_p\rA_2, \lA \bW_s^T \bU_p\rA_2 \}}{\delta} \nonumber\\
 \!\! &\le& \!\! \frac{\lA \bW_s \rA_2}{\delta}  \nonumber\\
  \!\! &=& \!\! \sqrt{\frac{m M}{P_1}}\ \frac{\lA \bW_1 \rA_2}{\delta}, \label{subspace distance bound}
\eeq
where $\bY_1 = \widehat{\bU}_p \widehat{\bSig}_p \widehat{\bV}_p^T $ and $\bL_1 = {\bU}_p {\bSig}_p {\bV}_p^T $ are the SVD of $\bY_1$ and $\bL_1 $, respectively.
\end{lemma}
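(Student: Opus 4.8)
The plan is to apply Wedin's $\sin\Theta$ theorem directly. The quantity $\eta = \lA \widehat{\bU}_p^T \bU_{p\bot}\rA_2 = \lA \widehat{\bU}_{p\bot}^T \bU_p\rA_2$ equals the sine of the largest principal angle between $\text{col}(\bU_p)$ and $\text{col}(\widehat{\bU}_p)$, i.e. precisely the subspace distance recorded in \eqref{subspace metric}, so no extra work is needed there. First I would cast the observation model $\bY_1 = \bL_1 + \bW_s$ into the form required by the theorem: take $\bL_1$ as the unperturbed reference matrix, $\bY_1$ as the perturbed matrix, and $\bW_s$ as the perturbation, and split their SVDs into the leading rank-$r$ blocks $(\bU_p, \bSig_p, \bV_p)$ and $(\widehat{\bU}_p, \widehat{\bSig}_p, \widehat{\bV}_p)$ together with the trailing blocks.

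Next I would identify the Wedin residuals. Because $\bL_1$ has rank exactly $r$, its leading blocks satisfy $\bL_1 \bV_p = \bU_p \bSig_p$ and $\bL_1^T \bU_p = \bV_p \bSig_p$, so the residual matrices collapse to $\bR := \bY_1 \bV_p - \bU_p \bSig_p = \bW_s \bV_p$ and $\bS := \bY_1^T \bU_p - \bV_p \bSig_p = \bW_s^T \bU_p$. For the same reason, all trailing singular values of $\bL_1$ vanish while its retained singular values are at least $\lam_r$, so the singular-value separation that drives Wedin's bound reduces to the gap between $\lam_r$ and the largest trailing singular value of $\bY_1$, namely $\delta = \lam_r - \widehat{\lam}_{r+1}$. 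This $\delta$ is positive whenever $\lA \bW_s\rA_2 < \lam_r$, since Weyl's inequality gives $\widehat{\lam}_{r+1} = \sigma_{r+1}(\bL_1 + \bW_s) \le \sigma_{r+1}(\bL_1) + \lA \bW_s\rA_2 = \lA \bW_s\rA_2$. Under this separation, the ``max'' form of Wedin's theorem from \cite{li1998relative, stewart2006perturbation} yields $\lA \widehat{\bU}_p^T \bU_{p\bot}\rA_2 \le \max\{\lA \bR\rA_2, \lA \bS\rA_2\}/\delta$, which is the first inequality in the statement.

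The remaining bounds are routine. Since $\bU_p$ and $\bV_p$ are semi-unitary, $\lA \bU_p\rA_2 = \lA \bV_p\rA_2 = 1$, so submultiplicativity of the spectral norm gives $\lA \bW_s \bV_p\rA_2 \le \lA \bW_s\rA_2$ and $\lA \bW_s^T \bU_p\rA_2 \le \lA \bW_s\rA_2$, which is the second inequality. Finally, substituting $\bW_s = \sqrt{mM/P_1}\,\bW_1$ from \eqref{expression L1} and factoring the scalar out of the norm produces $\lA \bW_s\rA_2 = \sqrt{mM/P_1}\,\lA \bW_1\rA_2$, the last equality.

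I expect the only real subtlety to be the bookkeeping around Wedin's hypotheses: verifying that, because $\bL_1$ has rank exactly $r$, the generic gap condition genuinely collapses to the single quantity $\delta = \lam_r - \widehat{\lam}_{r+1}$, and pinning down which of the several Wedin-type statements in \cite{li1998relative, stewart2006perturbation} is being invoked, since the quoted bound uses $\max\{\lA \bR\rA_2, \lA \bS\rA_2\}$ rather than $\sqrt{\lA \bR\rA_2^2 + \lA \bS\rA_2^2}$. Everything downstream of that is Weyl's inequality for the positivity of $\delta$ together with submultiplicativity of the spectral norm.
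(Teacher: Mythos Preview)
Your proposal is correct and matches the paper's approach: the paper gives no separate proof for this lemma, since the chain of inequalities in the statement already \emph{is} the argument, with the first line attributed directly to Wedin's $\sin\Theta$ theorem and the remaining two steps being immediate. Your write-up simply fleshes out the bookkeeping (identifying the residuals $\bR=\bW_s\bV_p$, $\bS=\bW_s^T\bU_p$, checking the gap condition via Weyl, and invoking submultiplicativity), which is exactly what the paper leaves implicit.
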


Given the subspace estimation accuracy in \eqref{subspace distance bound}, it is of interest of evaluate estimation error between $\bL$ and $\widehat{\bL}$, which is given in the following lemma.
\begin{lemma}
Suppose the low-rank matrix $\bL \in \R^{M \times N}$, and the estimation result $\widehat{\bL}$ is obtained through Algorithm 1. The estimation error is given by
\beq
\begin{aligned}
\lA \bL - \widehat{\bL} \rA_F \le
 \!\! \sqrt{\frac{mM}{P_1}}\!\! \left(\!\!\lA  \bW_1 \rA_F \!\!+\!\!\frac{\lA \bW_1 \rA_2\lA \bL \rA_F}{\delta}\!\! \right)\!\! + \!\!\sqrt{\!\!\frac{r(N-m)}{P_2}}\!\! \lA \bW_2 \rA_F ,
\label{error bound2}
\end{aligned}
\eeq
where $\delta$ is defined in same way as Lemma \ref{subspace bound lemma}.
\end{lemma}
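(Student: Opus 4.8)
The plan is to separate the reconstruction error into a ``subspace mismatch'' part and a ``measurement noise'' part, bound each one, and recombine using the subspace accuracy guarantee of Lemma~\ref{subspace bound lemma}. First I would exploit that $\bP=[\bZ_1,\bZ_2]^{-1}$ is a permutation matrix, hence orthogonal, so that $\lA \bL-\widehat{\bL}\rA_F=\lA \bU_p\bQ-\widehat{\bU}_p\widehat{\bQ}\rA_F$ with $\bQ=[\bQ_1,\bQ_2]$ and $\widehat{\bQ}=[\widehat{\bQ}_1,\widehat{\bQ}_2]$. Substituting \eqref{Q1 estimation} and \eqref{Q2 estimation} together with $\bL_1=\bU_p\bQ_1$ and $\bL_2=\bU_p\bQ_2$ rewrites $\widehat{\bQ}=\widehat{\bU}_p^T\bU_p\bQ+\bE$, where $\bE=[\,\sqrt{mM/P_1}\,\widehat{\bU}_p^T\bW_1,\ \sqrt{r(N-m)/P_2}\,\bW_2\,]$ aggregates the noise of the two steps. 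Hence $\bU_p\bQ-\widehat{\bU}_p\widehat{\bQ}=(\bI_M-\widehat{\bU}_p\widehat{\bU}_p^T)\bU_p\bQ-\widehat{\bU}_p\bE$, and the triangle inequality gives $\lA \bL-\widehat{\bL}\rA_F\le\lA (\bI_M-\widehat{\bU}_p\widehat{\bU}_p^T)\bU_p\bQ\rA_F+\lA \widehat{\bU}_p\bE\rA_F$.

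For the first term I would write $(\bI_M-\widehat{\bU}_p\widehat{\bU}_p^T)\bU_p\bQ=\widehat{\bU}_{p\bot}\widehat{\bU}_{p\bot}^T\bU_p\bQ$, whose Frobenius norm is at most $\lA \widehat{\bU}_{p\bot}^T\bU_p\rA_2\,\lA \bQ\rA_F=\eta\,\lA \bQ\rA_F$ with $\eta$ the subspace distance in \eqref{subspace metric}; since $\bU_p$ is semi-unitary and $[\bZ_1,\bZ_2]$ orthogonal, $\lA \bQ\rA_F=\lA \bU_p\bQ\rA_F=\lA [\bL_1,\bL_2]\rA_F=\lA \bL\rA_F$, and Lemma~\ref{subspace bound lemma} supplies $\eta\le\sqrt{mM/P_1}\,\lA \bW_1\rA_2/\delta$, so this term is bounded by $\sqrt{mM/P_1}\,\lA \bW_1\rA_2\,\lA \bL\rA_F/\delta$. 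For the second term, $\lA \widehat{\bU}_p\bE\rA_F=\lA \bE\rA_F$ because $\widehat{\bU}_p^T\widehat{\bU}_p=\bI_r$, and the block structure of $\bE$ gives $\lA \bE\rA_F^2=\frac{mM}{P_1}\lA \widehat{\bU}_p^T\bW_1\rA_F^2+\frac{r(N-m)}{P_2}\lA \bW_2\rA_F^2\le\frac{mM}{P_1}\lA \bW_1\rA_F^2+\frac{r(N-m)}{P_2}\lA \bW_2\rA_F^2$; applying $\sqrt{a+b}\le\sqrt a+\sqrt b$ turns this into $\sqrt{mM/P_1}\,\lA \bW_1\rA_F+\sqrt{r(N-m)/P_2}\,\lA \bW_2\rA_F$. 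Adding the two bounds reproduces \eqref{error bound2} exactly.

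I do not expect a genuine obstacle: the individual estimates are all routine (norm of a product, invariance under semi-unitary and permutation factors, one scalar inequality). The one point that needs care is recognizing that every column of $\bL$, hence both $\bL_1$ and $\bL_2$, lies in $\text{col}(\bU_p)$; this is precisely what lets $\bU_p$ be factored out of $\bQ$ and collapses the mismatch term to a single $\eta\,\lA \bL\rA_F$ rather than a sum over the two blocks. A minor secondary point is that the step-one residual involves the projected noise $\widehat{\bU}_p^T\bW_1$ rather than $\bW_1$ itself, but this only tightens the estimate since $\lA \widehat{\bU}_p^T\bW_1\rA_F\le\lA \bW_1\rA_F$, and the statement is phrased with the looser constant, so nothing is lost.
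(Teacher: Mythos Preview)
Your proposal is correct and follows essentially the same approach as the paper's proof: split the error into a subspace-mismatch term bounded via $\lA \widehat{\bU}_{p\bot}^T\bU_p\rA_2\lA \bL\rA_F$ and Lemma~\ref{subspace bound lemma}, plus a noise term bounded by $\lA \bW_s\rA_F+\lA \bW_r\rA_F$. Your write-up is in fact slightly cleaner than the paper's, which handles the block noise term with somewhat informal ``$+$'' notation, whereas you make the block structure of $\bE$ and the use of $\sqrt{a+b}\le\sqrt a+\sqrt b$ explicit.
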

\begin{proof}
Given the expression of $\widehat{\bQ}$, we formulate the estimation error in the following form,
\beq
\begin{aligned}
\lA \bL- \widehat{\bU}_p \widehat{\bQ}  \bP \rA_F
&\le \lA \widehat{\bU}_p \widehat{\bU}_p^T  \bW_s \!\!+\!\! \bW_r \rA_F\!\! +\!\! \lA \bL - \widehat{\bU}_p \widehat{\bU}_p^T \bL \rA_F\\
&\le \lA \widehat{\bU}_p \widehat{\bU}_p^T  \bW_s \rA_F\!\! +\!\!  \lA \bW_r \rA_F  \!\!+\!\! \lA \bL - \widehat{\bU}_p \widehat{\bU}_p^T \bL \rA_F\\
 &\le \lA   \bW_s \rA_F\!\! +\!\!  \lA \bW_r \rA_F \!\!+ \!\!\lA \left( \bI\!\! - \!\! \widehat{\bU}_p \widehat{\bU}_p^T\right) \bU_p \bU_p^T \rA_2 \!\! \lA \bL \rA_F\\
 &= \lA   \bW_s \rA_F \!\!+  \lA \bW_r \rA_F\!\! +  \!\! \lA \left(\widehat{\bU}_{p\bot} \widehat{\bU}_{p\bot}^T \right) \!\! \bU_p \bU_p^T \rA_2  \lA \bL \rA_F\\
 & \le  \lA  \bW_s \rA_F +  \lA \bW_r \rA_F + \lA  \widehat{\bU}_{p\bot}^T \bU_p  \rA_2 \lA \bL \rA_F, \label{error1}
\end{aligned}
\eeq
where $\widehat{\bU}_{p\bot} \widehat{\bU}_{p\bot}^T$ denotes the projection matrix onto the complementary subspace spanned by $\widehat{\bU}_p$, and $\bU_p \bU_p^T$ is the projection matrix onto the subspace spanned by $\bU_p$.

Therefore, combining \eqref{error1} and \eqref{subspace distance bound}, we can have the error in \eqref{error bound2}.
This concludes the proof.\hfill \qed
\end{proof}

Note that the bound in \eqref{error bound2} will hold for any possible noise.
Compared to the MSE when the column subspace is exact, the MSE in \eqref{error bound2} includes one more part, such that $\sqrt{\frac{m M}{P_1}}\frac{\lA \bW_1 \rA_2\lA \bL \rA_F}{\delta}$, which comes from the error of estimate of column subspace.
Using the proposed two-step method, the necessary number of samples is
\beq
p=m M + r(N-m).
\eeq
In particular, if we set the $m=r$, the number of observations is just equal to the degrees of freedom of esimated low-rank matrix.
\begin{remark}
The complexity of the two-step method mainly comes from SVD computation $\bY_1$, i.e., $\mathcal{O} (Mm^2)$. Thus, when $m$ is just several times of the rank $r$, the computational complexity of the proposed two-step method is in a small order.
\end{remark}

\begin{figure}[htbp]
\centering
\includegraphics[width=3.2 in]{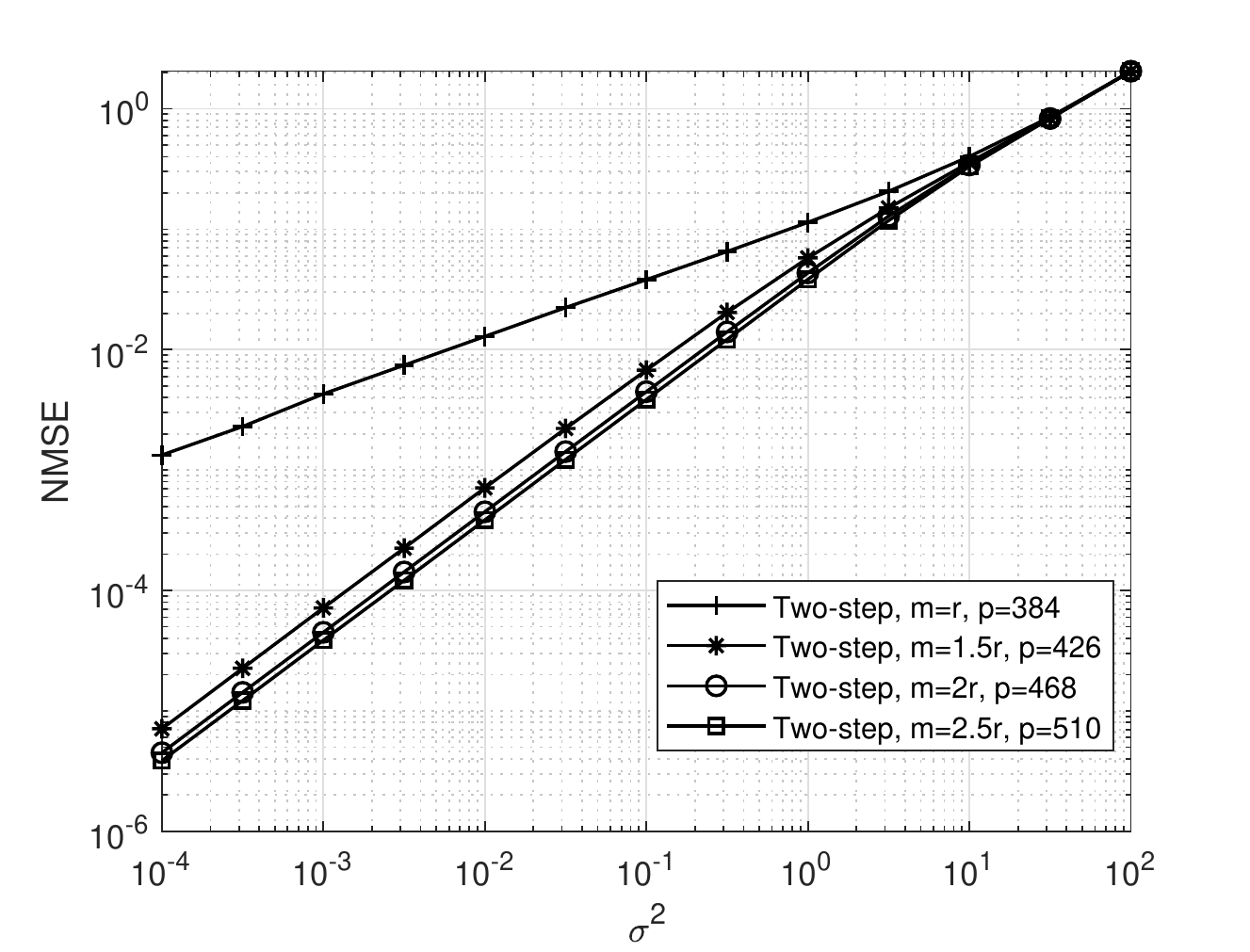}
\caption{NMSE  vs. Noise levels with differen number of observations($M = 20, N = 50, r=6$).}  \label{different observations}
\end{figure}
\section{Simulation results} \label{simulation}

In this section, we simulate the performance of the proposed two-step matrix reconstruction method compared to two non-adaptive methods, i.e., MF \cite{AlternatingMin} and NNM \cite{recht}, and one adaptive method, i.e., subspace pursuit (SP) approach\cite{HighDimension}.
The simulation parameters are $M=20, N=50, r=6$. The normalized MSE (NMSE) is defined as
\beq
\text{NMSE} = \E\left( \lA\widehat{\bL} - \bL \rA_F^2 /\lA \bL \rA_F^2 \right). \nonumber
\eeq

Here, we assume the power associate with the two steps are $P_1=P_2 = MN$.
For the simulated methods, each point of the curve is plotted by averaging the NMSE of 1000 trails.
\subsection{The affect of number of observations }
First of all, we illustrate the effect of number of observations on the reconstruction accuracy in Fig. \ref{different observations}. We let the number of column $m$ for the first step is ranged in $\{ r,1.5r,2r,2.5r\} = \{6,9,12,15 \}$. Accordingly, the total number of observations are
$mM + r(N-m)$.
In general, when more columns are utilized for the first step, more accurate columns subspace can be obtained.
As we can see, when the number of columns for the first step is $1.5r$, the robust performance can be achieved. This means that when we let $m = 1.5r$ in the first step, we can acquire a robust estimation for the subspace information.
\begin{figure}[htbp]
\centering
\includegraphics[width=3.2 in]{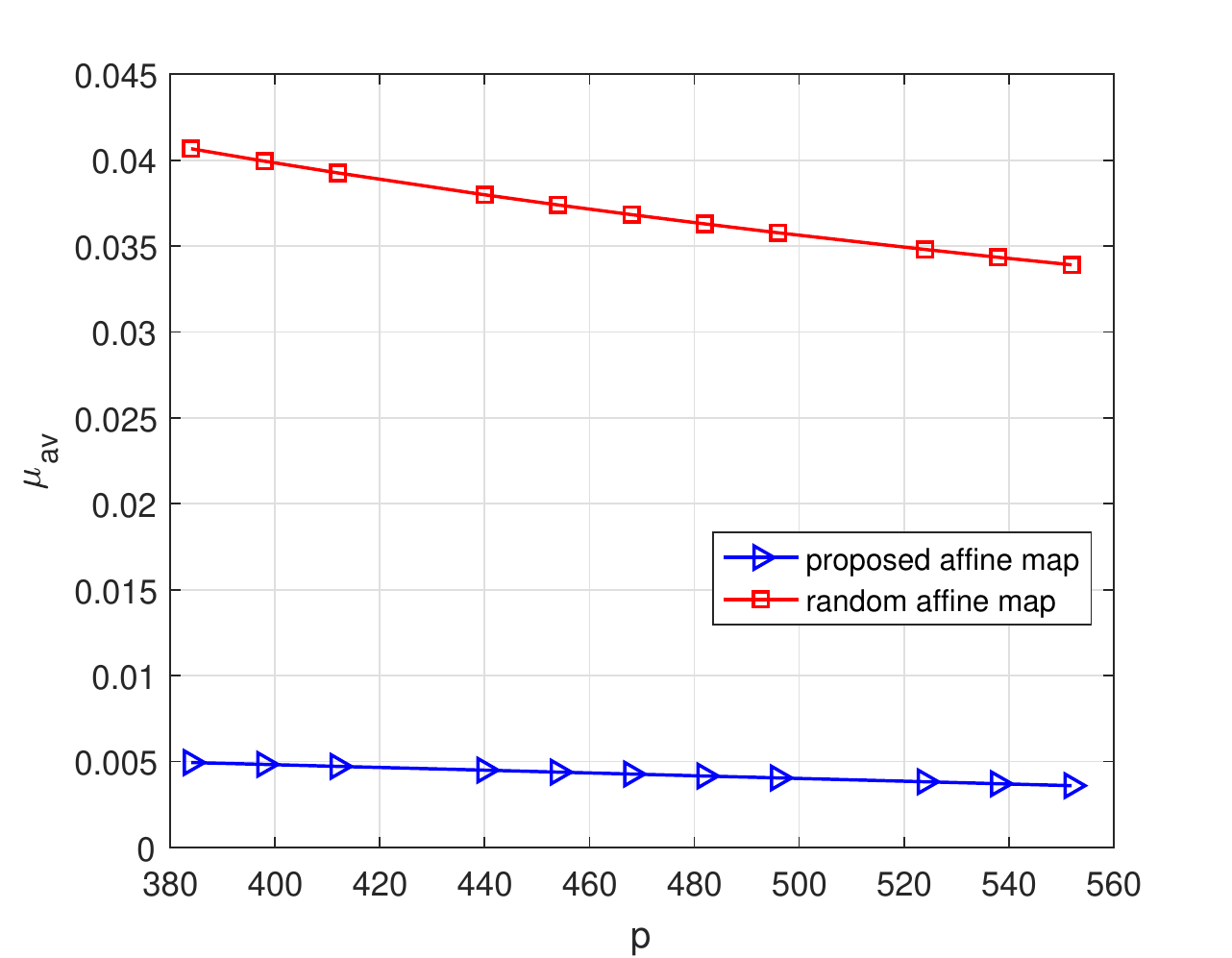}
\caption{Averaged mutual coherence  vs. Number of observations   ($M = 20, N = 50$).} \label{figure_RIP}
\end{figure}
\redd{
\subsection{The characteristic of designed affine map}}
\redd{In this subsection, we compare the affine map of the proposed two-step method with the randomly generated affine map. As far as we know, it is NP-hard to compute the RIP constant for the given affine map. Therefore, we turn to compute the alternative constant to RIP, namely, averaged mutual coherence defined in \cite{elad_CS}, which evaluates the averaged coherence of the columns of affine map matrix $\bS \in \R^{p \times MN}$. Specifically,  letting $\bar{\bS}\in \R^{p \times MN}$ be the matrix which normalizes the columns of $\bS$, the averaged mutual coherence of $\bS$ is given by
\beq
\mu_{av}(\bS) = \frac{\sum \limits_{1 \le i,j \le MN,i\neq j} \tr([\bar{\bS}]_{:,i}^T [\bar{\bS}]_{:,j})}{M^2 N^2-p}. \nonumber
\eeq
It has been shown that the reconstruction accuracy is related to the value of $\mu_{av}$ \cite{elad_CS, matrixOptimization}. In Fig. \ref{figure_RIP}, we evaluate the averaged mutual coherence of proposed affine map compared to that of gaussian random affine map, i.e., $[\bS]_{i,j} \sim \cN(0,1), \forall i,j$.
As can be seen from Fig. \ref{figure_RIP}, the proposed affine map design achieves the lower averaged mutual coherence compared to the randomly generated affine map. Therefore, it could be expected that reconstruction accuracy achieved by the proposed affine map will be higher than that of randomly generated affine map.}

\subsection{The accuracy comparisons with benchmarks}
\begin{figure}[htbp]
\centering
\includegraphics[width=3.2 in]{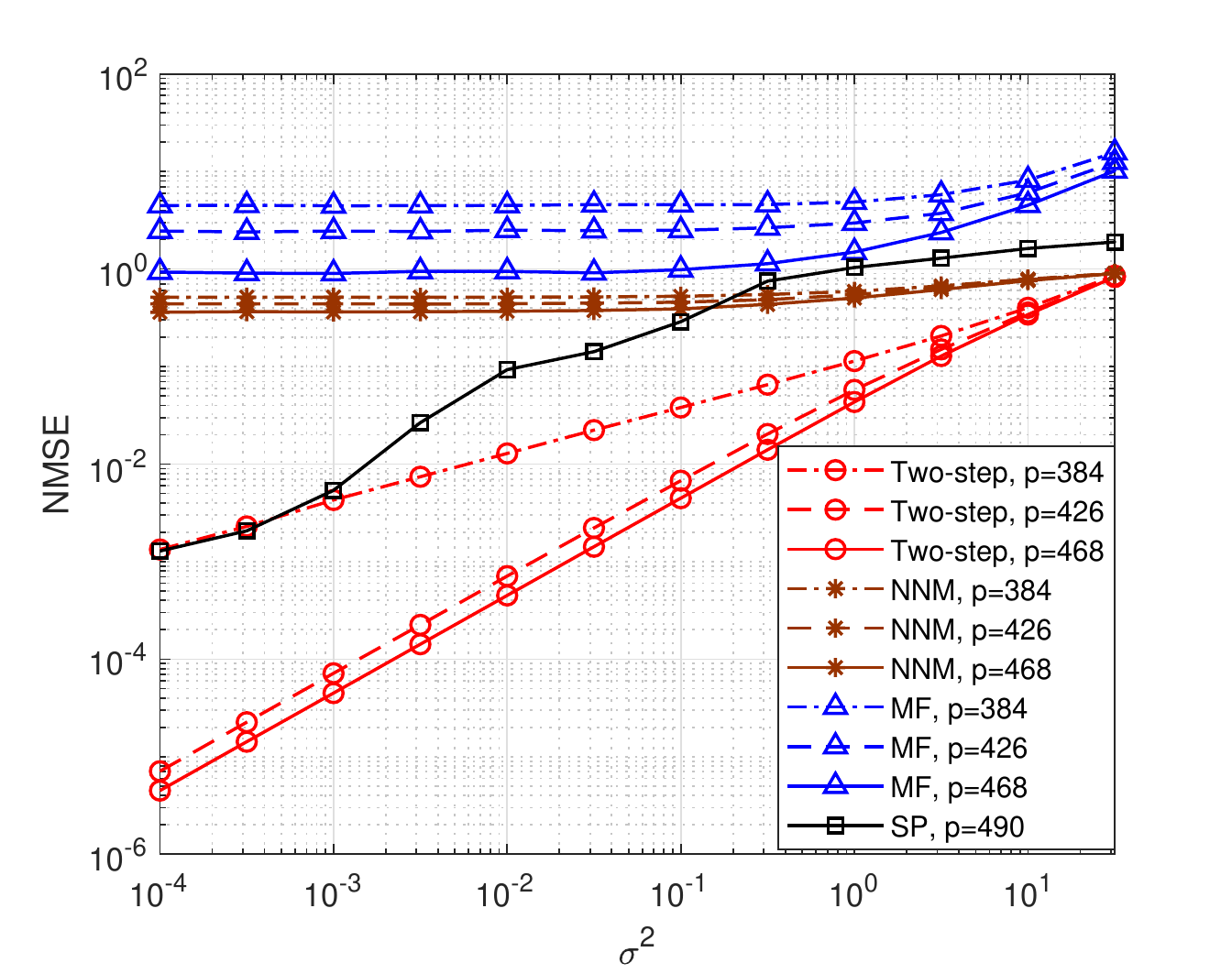}
\caption{ NMSE vs. Noise levels  ($M = 20, N = 50, r=6$).} \label{figure_benchmark}
\end{figure}

\begin{figure}[htbp]
\centering
\includegraphics[width=3.2 in]{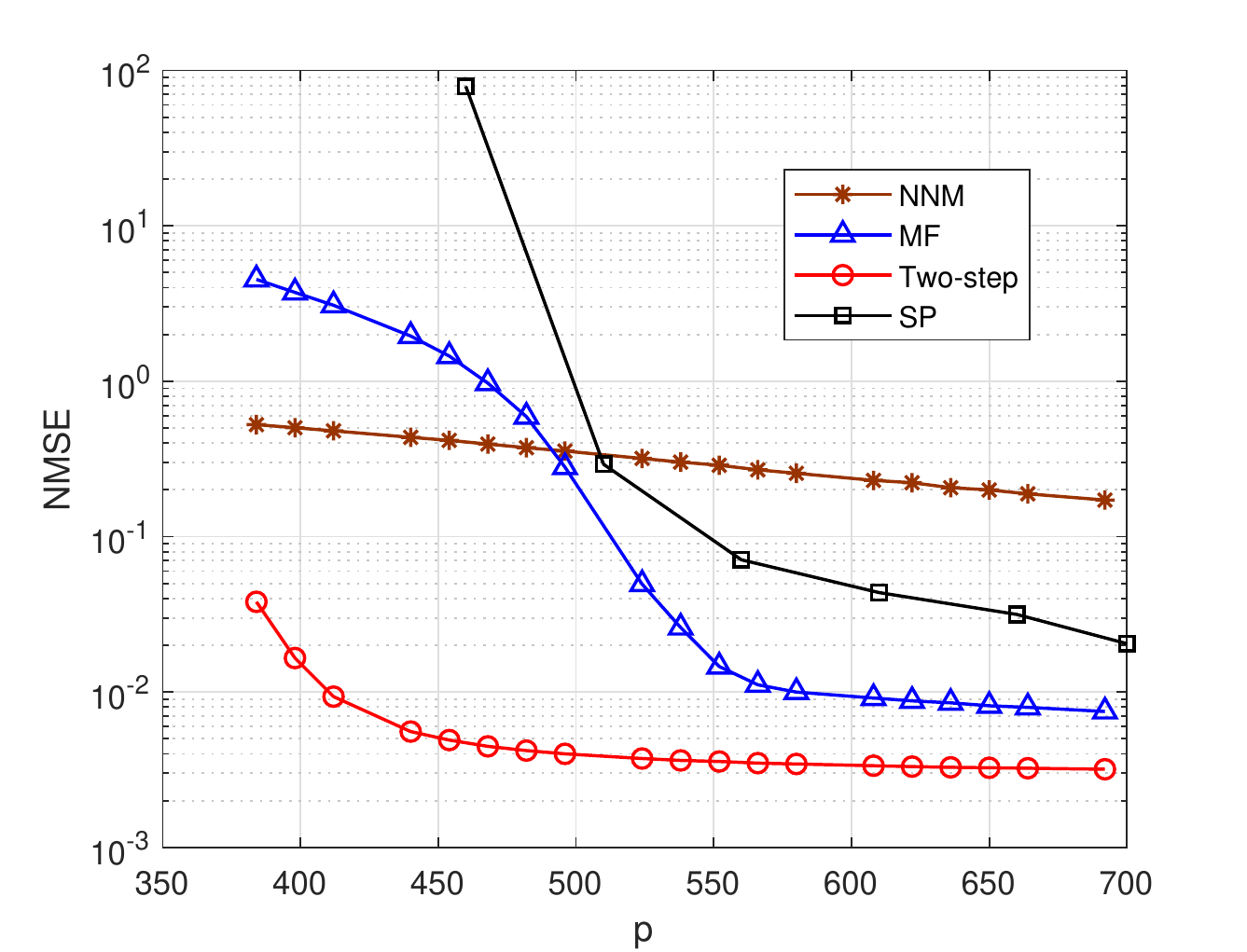}
\caption{ NMSE vs. Number of observations   ($M = 20, N = 50, r=6,\sigma^2=0.1$).} \label{benchmark_different_observations}
\end{figure}

In Fig. \ref{figure_benchmark}, we compare the performance of proposed two-step method with NNM, MF and SP approaches. We set the number of observations for proposed two-step method, NNM and MF selected from  $\{384,426,468 \}$. Since the SP method requires more observations in order to achieve valid reconstruction, we let the number of observations for SP be $p=490$.
As can been seen that NNM, MF and proposed two-step method can benefit from the increasing of number of observations.
It is clear that the proposed two-step method outperforms the NNM, MF and SP. Moreover, the performance gap is larger when the noise level is low. In particular, when the number of observations is $p=426$, the proposed two-step method has a clear performance gap compared to the benchmarks. This is because the performance of NNM, MF and SP will be restricted by the number of observations. If the number of observations is not sufficient, i.e., $p=426$, the accuracy of NNM, MF and SP will be saturated when the noise level is low. Different from the saturated phenomenon of benchmarks, for proposed two-step method, more accurate reconstruction can be obtained if the noise level is decreasing furthermore.

In Fig. \ref{benchmark_different_observations}, we evaluate the performance of proposed two-step method under different observations compared to the MF,  NNM, and SP. The noise level is set as  $\sigma^2=0.1$.
As we can see, all the simulated methods will benefits from the increasing of the number of observations.
Moreover, the proposed two-step method outperforms the others. When the number of observations is $p=384$, which is equal to the degrees of freedom of the simulated low-rank matrix, i.e., $(50+20-6)\times 6=384$, the proposed two-step method can achieve a robust estimation. This validates the fact the proposed two-step method only requires the number of observations that is approximately equal to the degrees of freedom of low-rank matrix.
\begin{figure}[htbp]
\centering
\includegraphics[width=3.2 in]{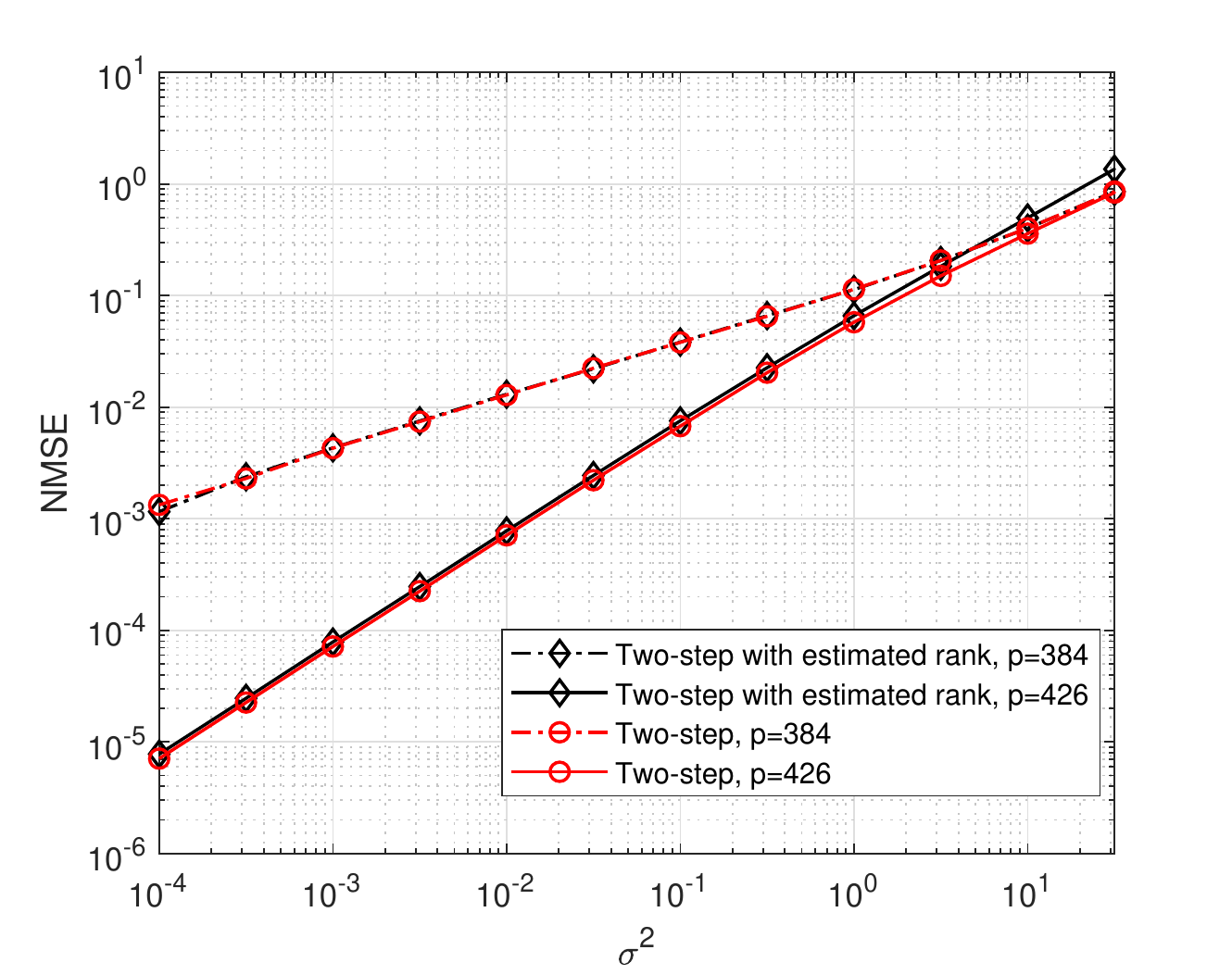}
\caption{ NMSE vs. Noise levels   ($M = 20, N = 50, r=6$).} \label{estimated r with true r}
\end{figure}
\subsection{The reconstruction accuracy by using estimated rank information}
In this simulation part, we simulate the reconstruction performance when we utilize the estimated rank information instead of true rank information. As we can see in Fig. \ref{estimated r with true r}, by using the estimated rank $\hat{r}$ in \eqref{rank estimate}, the similar performance can be achieved compared to the scenario when we utilize the true rank information. Therefore, the rank estimation method in \eqref{rank estimate} is robust for the low-rank matrix reconstruction under the noisy cases.

\section{Conclusion} \label{conclusion}
In this paper, we investigate the low-rank matrix reconstruction when the subspace information is known.
Under the subspace information, the optimal representation of low-rank matrix is analyzed in order to obtain minimal MSE. In the case that no subspace information is aware, the proposed two-step method can handle this practical scenario. The first step will obtain the column subspace of the low-rank matrix, and the second step will get the remaining coefficient information of the low-rank matrix. By using the observations approximately equal to degrees of freedom of the low-rank matrix, the simulation results show that the proposed two-step method experiences robust performance compared to the existing reconstruction methods.

\section*{Acknowledgments}
This work is supported in part by a grant from the Research Grants Council of the Hong Kong SAR, China (No. CityU 11272516), and by Theme Based Research (No. T42-103/16-N).

\section*{References}

\bibliography{reference}
\end{document}